\renewcommand{\H}{\mbox{$\bf{H}$}}
\newcommand{\G}{\mbox{$\bf{G}$}}
\newcommand{\ord}{\mbox{$o$}}
\newcommand{\Ord}{\mbox{$O$}}
\newcommand{\DT}{\mbox{\rm DTIME}}
\newcommand{\NT}{\mbox{\rm NTIME}}
\newcommand{\AT}{\mbox{\rm ATIME}}
\newcommand{\XT}{\mbox{\rm XTIME}}
\newcommand{\NS}{\mbox{\rm NSPACE}}
\newcommand{\redlin}{\mbox{$\leq_{\mbox{\rm\footnotesize lin}}$}}
\newcommand{\redlino}{\mbox{$\leq_{\mbox{\rm\footnotesize lin,1}}$}}
\newcommand{\code}{\mbox{\rm code}}
\newcommand{\NP}{\mbox{\rm NP}}
\renewcommand\P{\mbox{\rm P}}
\newcommand{\co}{\mbox{\rm co}}
\newcommand{\N}{\mbox{$\mathbb{N}$}}                        
\newcommand{\Ni}{\mbox{\footnotesize $\mathbb{N}$}}         
\newcommand{\Nplus}{\mbox{$\mathbb{N}_+\,$}}
\newcommand{\fct}[3]{\mbox{$#1 :\, #2\, \longrightarrow \, #3$}}
\newcommand{\en}{\enspace}
\newcommand{\bea}{\begin{eqnarray*}}
\newcommand{\eea}{\end{eqnarray*}}
\begin{document}

 \setcounter{page}{1}

\thispagestyle{empty}

\vspace*{0.3cm}

\begin{center}\Large\bf On the Tape-Number Problem \\
 for Deterministic Time Classes
  \vspace*{0.5cm}

\normalsize\rm Armin Hemmerling
 \vspace*{0.12cm}

\small Ernst-Moritz-Arndt--Universit\"at Greifswald,
           Institut f\"ur Mathematik und Informatik\\
 Walther-Rathenau-Str. 47,
           D--17487 Greifswald, Germany\\
{\small\tt hemmerli@uni-greifswald.de}
 \vspace*{0.12cm}

\normalsize
  September 2012
\end{center}
 \vspace*{1.cm}

\hspace*{0.25cm}
\begin{minipage}{14.cm}
\small
\noindent
 {\bf Abstract.} \en
For any time bound $f$, let $\H(f)$ denote the hierarchy conjecture which means that the restriction of the numbers of work tapes of deterministic Turing machines to $b\in\N$ generates an infinite hierarchy of proper subclasses $\DT_b(f) \subset \DT(f)$. We show that $\H(f)$ implies separations of deterministic from nondeterministic time classes.
$\H(f)$ follows from the gap property, $\G(f)$, which says that there is a time-constructible $f_2$ such that $f\in\ord(f_2)$ and $\DT(f)=\DT(f_2)$. $\G(f)$ implies further separations.
All these relationships relativize.
   \\[1ex]
{\bf MSC (2010):} \en   68Q15, 03D15 \\[0.7ex]
{\bf Keywords:} \en  Turing machine, time complexity, tape number, determinism versus nondeterminism, gap property, relativization.
  \end{minipage}
\vspace*{0.8cm}

\section{Introduction and overview}

For nondeterministic and alternating Turing machines, the number of work tapes can be restricted to two and even to one, respectively, without any loss of time, see facts $(1)$ and $(2)$ in the next section. This implies rather strong hierarchy results for the related time complexity classes. For deterministic acceptors the situation is different. The best known technique to reduce the number of work tapes to some constant comes from the early days of complexity theory. It causes a slow down from a time bound $f(n)$ to $f(n)\cdot \log(\,f(n)\,)$, see fact $(3)$. As a consequence, the deterministic time hierarchy theorem states the separation of the deterministic time classes, $\DT(f_1)\subset\DT(f_2)$, only under the supposition that
$f_1\cdot(\log\circ f_1)\in \ord(f_2)$ and $f_2$ is time-constructible.

In \cite{CN} it is considered plausible that $(b+1)$-tape deterministic linear-time Turing machines are strictly more powerful than $b$-tape such machines. Using the notation recalled in the next section, this means the conjecture that $\DT_b(n) \subset \DT_{b+1}(n)$. It holds for $b=0$. By \cite{Hen}, $\DT_1(n)$ even contains languages that cannot be accepted by any single-tape off-line machine (without a separate input tape) working in linear time, e.g., the language of palindromes. Recently \cite{TYL} it was shown that nondeterministic single-tape $\ord(n\cdot \log(n))$-time Turing machines accept only regular languages. By an involved analysis of computation graphs, a proof of the strictness of the inclusion $\DT_1(n) \subset \DT_2(n)$ was given in \cite{MSST}. For $b\geq 2$, however, no proof of the above conjecture has been presented so far.

To complete the picture, we mention some results concerning more restrictive models of computation. The analogous strict hierarchy property with respect to the numbers of tapes holds for deterministic real-time Turing machines by \cite{Aa}. More generally, $b$-tape Turing machines require more than linear time in order to on-line simulate $(b+1)$-tape machines,
see \cite{Pa}. For a unifying view and a discussion of these results, the reader is referred to \cite{PSSN}.

In this paper, we consider a weaker version of the above conjecture and generalize it to arbitrary time bounds $f$.
This yields the following definition.
\begin{definition}[Hierarchy Property]
For a time bound $f$, let $\,\H(f)\,$  mean that
$\, \DT_b(f)\not=\DT(f)\,$  for all $b\in\N$.
\end{definition}

We shall show that, for a variety of functions $f$, $\H(f)$ implies the separation $\DT(f)\not=\NT(f)$.
Moreover, $\H(f)$ is downward hereditary, see Lemma 3.
Thus, a proof of $\H(f)$, if it is possible at all, would have to be rather involved and can hardly be expected here, in particular for superlinear time bounds $f$.

 Another property we shall deal with, too, means that there is a gap of growth of complexity classes defined by time-constructible bounds just above the function $f$ (for the notion of time-constructibility, we refer to the next section):
\begin{definition}[Gap Property]
For a time bound $f$, let $\,\G(f)\,$  mean that
there is a time-constructible time bound $f_2$ such that
 $f\in\ord(f_2)$  and  $\DT(f)=\DT(f_2)$.
\end{definition}

Even if it has not yet been expressed in this form, the gap property is obviously closely related to the well-known problem whether or not the deterministic time hierarchy can be refined. More precisely, the negation, $\neg\,\G(f)$, would just mean a stronger hierarchy result for the complexity classes of time bounds immediately above $f$.

 It will turn out that $\G(f)$ implies both $\H(f)$ as well as, in many cases, $\DT(f_2)\not=\NT(f_2)$ for the related bounds $f_2$.
 An upward transfer of $\G(f)$ holds by Lemma 5.
 Thus, we cannot expect to prove $\G(f)$ by means of the known standard techniques.
So, on the one hand, our results might point out a new way towards separations of determinism from nondeterminism. On the other hand, they demonstrate the hardness of confirming properties like $\H(f)$ or $\G(f)$.

The paper is organized as follows.
Section 2 explains the basic notions and provides some facts concerning the hierarchy property. Section 3 deals with the gap property. Section 4 introduces the notion of $f$-complete sets whose existence is shown in Section 5, for a variety of time bounds $f$. Theorem 1 says that $\H(f)$ implies a separation of the deterministic from the nondeterministic time classes for those bounds $f$. Finally, Section 6 collects the main results in a figure and discusses relativizations. These are formulated and confirmed in more detail in an appendix.

\section{Basic notions and results around the hierarchy property}

By a {\it time bound}\/, we always mean a function $\fct{f}{\N}{\N}$ such that $n\leq f(n)\leq f(n+1)$ for all $n\in\N$.
For a prefix $\mbox{\rm X}\in\{\mbox{\rm D}, \mbox{\rm N}, \mbox{\rm A}\}$, the {\it deterministic}\/, {\it nondeterministic}\/ and {\it alternating}\/, respectively, {\it time complexity classes}\/ $\XT(f)$ consist of all languages which are accepted by an $\mbox{\rm X}$TM $\frak{M}$ with a time complexity $t_{\frak{M}}\in \Ord(f)$. Under an $\mbox{\rm X}$TM, we here understand a deterministic, nondeterministic and alternating, respectively, {\it Turing machine}\/ with a special read-only input tape and arbitrarily many, say $b\in\N$, additional work tapes. For further details of definitions, the reader is referred to textbooks like \cite{BDG,DK,Re}. By restricting the number of work tapes to some $b\in\N$, i.e., by considering {\it $b$-tape} $\mbox{\rm X}$TMs only, we obtain the complexity classes $\XT_b(f)$. Thus, $\XT(f)=\bigcup_{b\in\Ni}\XT_b(f)$.

For nondeterministic and alternating TMs, there are well-known tape-number reductions:
\bea
(1) \quad \NT(f) & = & \NT_2(f), \en \mbox{ see \cite{BGW}};\\
(2) \quad \AT(f) & = & \AT_1(f), \en \mbox{ see \cite{PPR}}.
\eea
For the deterministic case, one only knows
\bea
(3) \quad \DT(f) & \subseteq & \DT_2(\,f\cdot (\log\circ f)\,), \en \mbox{ see \cite{HS}}.
\eea

Even for the linear time bound $f(n)=n$, the hierarchy conjecture $\H(n)$ has not yet been proved so far.
The hardness of this problem was emphasized by showing that $\H(n)$ does not relativize. We want to recall this result from \cite{He}, since it marks the starting point of the present investigations.
Moreover, we shall come back to relativizations at the end of this paper.

Let $\DT^A(n)$ and $\DT_{b_0}^A(n)$ denote the
complexity classes defined by linear time-bounded deterministic {\it oracle Turing machines}\/ with a separate input tape, arbitrarily many, respectively $b_0$, work tapes and an additional oracle tape on which queries to the oracle set $A$ can be put (and which is erased immediately after any query). Analogously, the classes $\DT^A(f)$ and $\DT_{b_0}^A(f)$ are defined for any time bound $f$, and the {\it relativized hierarchy property}\/ and
{\it relativized gap property}\/, $\H^A(f)$ and $\G^A(f)$, respectively, concern these relativized complexity classes in a straightforward manner. To show the following assertion, as oracle set we used a language $A$ which is $\NS(n)$-complete with respect to linear-time reduction.

\begin{lemma}[\cite{He}, Proposition 16]
$\H(n)$ does not relativize; this means that there is an oracle set $A$ such that $\neg\,\H^A(f)$ holds, i.e.,
 $\DT^A(n)= \DT_{b_0}^A(n)$ for some $b_0\in\N$.
\end{lemma}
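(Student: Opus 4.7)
The plan is to construct an oracle $A$ so powerful that any linear-time $A$-oracle computation can be outsourced to a single query. I would take $A$ to be $\NS(n)$-complete under linear-time reductions, computable on a fixed DTM with, say, $b_1$ work tapes; the existence of such an $A$ is standard by padding, since $\NS(n)$ is closed under linear padding and admits complete problems.

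The heart of the argument is the chain
\[
  \DT^A(n) \;\subseteq\; \DS^A(n) \;\subseteq\; \NS^A(n) \;=\; \NS(n).
\]
The first two inclusions are immediate (linear time implies linear space, and determinism implies nondeterminism). The crucial equality $\NS^A(n)=\NS(n)$ rests on the fact that $A\in\NS(n)$: an $\NS(n)$ machine can simulate an $\NS^A(n)$ computation by nondeterministically guessing each oracle answer; a guessed ``yes'' is then verified by running the $\NS(n)$ procedure for $A$ on the query string, and a guessed ``no'' by running the corresponding Immerman--Szelepcs\'enyi $\co\NS(n)=\NS(n)$ procedure. The verification scratch is reusable between queries, so the overall space remains $O(n)$; wrong guesses are rejected, so only consistent simulations survive, and the accepting branches coincide with those of the original $\NS^A$ machine.

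With this inclusion in hand, $\NS(n)$-completeness of $A$ provides a linear-time, $b_1$-tape reduction $\rho$ from any $L\in\DT^A(n)$, now viewed as an element of $\NS(n)$, to $A$. For each such $L$ I would build an oracle DTM $\frak{N}$ with $b_0=b_1+\Ord(1)$ work tapes that, on input $x$, runs $\rho$ in linear time to obtain $y=\rho(x)$, writes $y$ on the query tape, and accepts iff $A$ answers ``yes''. The number $b_0$ depends only on the chosen reduction, not on the number of work tapes of the original $A$-oracle machine accepting $L$; hence $\DT^A(n)=\DT^A_{b_0}(n)$, which is exactly $\neg\H^A(n)$.

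The main obstacle is the simulation underlying $\NS^A(n)\subseteq\NS(n)$: it must deterministically follow the original oracle computation while each oracle answer is extracted from a nondeterministic class. Immerman--Szelepcs\'enyi is essential here, as it is precisely what lets ``no'' answers to $A$ be certified in $\NS$; without it, an $\NS$-machine could verify only positive queries, and spurious negative guesses could slip through undetected. Once this simulation is in place and $\rho$ is chosen on a fixed tape count, the remainder of the argument is routine linear-time bookkeeping on a constant, $L$-independent number of tapes.
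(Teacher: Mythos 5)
Your proposal is correct and follows essentially the same route as the paper, which defers the proof to \cite{He} (Proposition 16) but indicates exactly this choice of oracle: an $\NS(n)$-complete set under linear-time reductions, with $\DT^A(n)$ collapsing into $\NS(n)$ via guess-and-verify simulation of oracle answers (Immerman--Szelepcs\'enyi for the negative ones) and then back into a fixed-tape-number oracle class via a single query preceded by a reduction computable on a bounded number of work tapes. The one point you rightly flag --- that the reductions to $A$ must be computable with an $L$-independent number of work tapes, which holds for the standard padded universal complete set --- is the same convention the paper relies on in its proof of Lemma~2.
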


Now we are going to present new results.

\begin{lemma}
$\H(n)$ implies $\DT(n)\not=\NT(n)$.
\end{lemma}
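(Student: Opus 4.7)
The plan is to prove the contrapositive: assume $\DT(n)=\NT(n)$ and exhibit some fixed $b_0\in\N$ with $\DT_{b_0}(n)=\DT(n)$, contradicting $\H(n)$. By fact $(1)$ the assumption rewrites as $\DT(n)=\NT_2(n)$, so it suffices to embed $\NT_2(n)$ into $\DT_b(n)$ for a single $b$.

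The key ingredient is a language that is complete for $\NT_2(n)$ under reductions which themselves use only a bounded number of work tapes and linear time. Fix a G\"odel numbering of $2$-tape NTMs and construct a universal $2$-tape NTM $U$ such that for every $2$-tape NTM $M$ and every input $x$, $U$ on $\langle M\rangle\#x$ simulates $M$ on $x$ step by step with a slowdown bounded by a constant $c_M$ depending only on $\langle M\rangle$. Because $U$ has the same tape count as its simulated machines, a direct tape-for-tape simulation yields genuinely constant (not $\log$-factor) overhead. Set
$$L_U\;=\;\{\,\langle M\rangle\#x\#1^k : U \text{ has an accepting computation on } \langle M\rangle\#x \text{ of length at most } k\,\}.$$
Clocking $U$ against the counter $1^k$ placed on one of its tapes shows $L_U\in\NT_2(n)$.

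By the assumption $\DT(n)=\NT_2(n)$ we have $L_U\in\DT(n)=\bigcup_{b\in\Ni}\DT_b(n)$, so $L_U\in\DT_{b_0}(n)$ for some fixed $b_0\in\N$. Now let $L\in\NT_2(n)$ be accepted by a $2$-tape NTM $M$ in time $\leq d_M\cdot n$. The mapping $\rho_M(x)=\langle M\rangle\#x\#1^{c_M d_M |x|}$ is computable in linear time by a $1$-work-tape DTM, $|\rho_M(x)|\in\Ord(|x|)$, and $x\in L$ iff $\rho_M(x)\in L_U$. A composite machine that first writes $\rho_M(x)$ on an extra work tape and then runs the $b_0$-work-tape acceptor for $L_U$, treating that tape as a virtual read-only input tape for the acceptor, accepts $L$ in linear time using $b_0+1$ work tapes. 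Thus $\NT_2(n)\subseteq\DT_{b_0+1}(n)$, which combined with $\DT(n)=\NT_2(n)$ yields $\DT(n)=\DT_{b_0+1}(n)$, contradicting $\H(n)$.

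The main delicate point is securing truly constant-factor overhead in the universal simulation. If one only had a logarithmic-factor universal simulation, in the spirit of fact $(3)$, then the padding in $\rho_M$ would have to be superlinear and $L_U$ would no longer sit in $\NT_2(n)$, and the argument would collapse. That both $U$ and every simulated machine are allowed two tapes saves the day: a direct tape-for-tape simulation gives constant per-step overhead, with all machine-dependent factors absorbed into the padding exponent $c_M d_M$.
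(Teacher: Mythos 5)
Your proposal is correct and follows essentially the same route as the paper: both use a padded, linear-time-complete language for $\NT(n)$ built from a universal simulation of $2$-tape NTMs (your $L_U$ is the paper's $V$ in a slightly different encoding), place it in some $\DT_{b_0}(n)$ under the assumption $\DT(n)=\NT(n)$, and compose the $1$-tape reduction machine with that acceptor to get $\DT(n)=\DT_{b_0+1}(n)$, refuting $\H(n)$. The only cosmetic difference is that you construct the complete language and justify the constant-overhead universal simulation explicitly, whereas the paper imports the completeness of $V$ from an earlier reference.
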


\begin{proof}
We employ the following language introduced in \cite{He}:
\bea
V & = & \{ \langle w, \code(\frak{M})^t\rangle: \;
   w\in\{0,1\}^*, |w|\leq t\in\Nplus \mbox{ and }\\
   & & \hspace*{3.0cm}\mbox{$\frak{M}$ is a 2-tape NTM
   that accepts $w$ in $\leq t$ steps}\}.
\eea
Herein $\code(\frak{M})$ is some standard encoding of the NTM $\frak{M}$ as a word over the binary alphabet $\{0,1\}$. Notice that, without loss of generality, we can restrict the complexity classes to languages
 $L\subseteq \{0,1\}^*$.
  $\langle \,\cdot\,,\cdot\,\rangle$ denotes an injective pairing function computable in linear time, with linear-time computable inverses (projections).
It is not hard to show that $V$ is $\NT(n)$-complete with respect to linear-time reductions, see \cite{He}.

Now we show that $\DT(n)=\NT(n)$ implies $\neg\,\H(n)$. If $\DT(n)=\NT(n)$, then $V\in\DT(n)$.
Thus, there would be a $b_0$-tape DTM $\frak{M}_V$ that decides $V$ in linear time.
Moreover, for any language $L\in\DT(n)\cap \{0,1\}^*$, a linear-time reduction of $L$ to $V$ is given by the assignment
\[ w \; \stackrel{\varrho}{\longmapsto} \; \langle w , \code(\frak{M}_L)^{c_L\cdot|w|}\rangle,
\]
where $\frak{M}_L$ is a $2$-tape NTM accepting $L$ within a time complexity $t_{\frak{M}_L}(n)\leq c_L\cdot n$.
The word function  $\varrho$ defined in this way can be computed in linear time by a $1$-tape DTM $\frak{M}_{\varrho}$.
The composition of $\frak{M}_{\varrho}$ with $\frak{M}_V$ yields a $(b_0+1)$-tape DTM deciding $L$ in linear time.
Hence we would have $\DT(n)=\DT_{b_0+1}(n)$, i.e., $\neg\,\H(n)$.
\qed
\end{proof}

The separation $\DT(n)\not=\NT(n)$, also written as DLIN$\not=$NLIN, was shown in \cite{PPST}, see also \cite{BDG,Gu} for some discussion. The proof is based on an involved analysis of computation graphs of deterministic linear-time acceptors.
Since Lemma 2 has easily been shown, a simple proof of $\H(n)$ cannot be expected.

Unfortunately, the proof of Lemma 2 cannot be transferred to superlinear bounds $f$. A reduction analogously to $\varrho$ would then need a time bound $\Ord(f(n))$, and the assumption $V\in\DT(f)$ would only lead to an $\Ord(f\circ f(n))$ decision of any $L\in\DT(f)$ by means of a uniformly bounded number of work tapes. Thus, this subject requires some more effort. We shall come back to it in Section 4.

Here we proceed with showing an upward transfer of $\neg\,\H(n)$.
{\it Time-constructibility}\/ of a time bound $g$ by a $b_g$-tape DTM $\frak{M}_g$ means that $\frak{M}_g$, for any input word $w$,
writes the word $1^{g(|w|)}$ on one of its work tapes and halts after at most $t_{\frak{M}_g}(|w|)\in\Ord(g(|w|))$ steps.

\begin{lemma}
If $\,\DT(f)=\DT_{b_0}(f)$ and the time bound $g$ is time-constructible by a $b_g$-tape DTM, then $\DT(f\circ g)=\DT_{\max(b_0,b_g)+1}(f\circ g)$.
\end{lemma}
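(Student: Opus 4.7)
The plan is to prove the nontrivial inclusion $\DT(f\circ g)\subseteq \DT_{\max(b_0,b_g)+1}(f\circ g)$ by a padding argument. Fix $L\in\DT(f\circ g)$, accepted by some DTM $\frak{M}_L$ in time $\Ord(f(g(n)))$. Associate to $L$ its padded version
\[ L' \;=\; \{\,\langle w,1^{g(|w|)}\rangle \,:\, w\in L\,\}. \]
First I would verify $L'\in\DT(f)$. On input $x$, a DTM splits $x$ into $\langle w,y\rangle$ in linear time using the linear-time inverses of the pairing function, uses $\frak{M}_g$ to write $1^{g(|w|)}$ on a work tape in time $\Ord(g(|w|))$, compares it symbol-by-symbol with $y$, and, in case of equality (which forces $|x|=\Theta(g(|w|))$), simulates $\frak{M}_L$ on $w$ in time $\Ord(f(g(|w|)))=\Ord(f(|x|))$ by monotonicity of $f$. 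The total running time is $\Ord(f(|x|))$, so indeed $L'\in\DT(f)$. By the hypothesis $\DT(f)=\DT_{b_0}(f)$, there is a $b_0$-tape DTM $\frak{M}'$ deciding $L'$ in time $\Ord(f)$.

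Now I would design a DTM $\frak{N}$ with $k:=\max(b_0,b_g)+1$ work tapes deciding $L$ in time $\Ord(f\circ g)$. On input $w$ of length $n$, $\frak{N}$ proceeds in two phases. In phase~1 it runs $\frak{M}_g$ on $w$ using $b_g\le k$ of its work tapes, producing $1^{g(n)}$ on one distinguished tape $T^*$ in time $\Ord(g(n))$; along the way it also writes the separator of the pairing function and a copy of $w$ (read from the input tape) so that $T^*$ ends up holding the word $\langle w,1^{g(n)}\rangle$. After phase~1, every tape except $T^*$ is erased. In phase~2, $\frak{N}$ simulates $\frak{M}'$ as if $T^*$ were $\frak{M}'$'s input tape, using $b_0$ further work tapes to mimic the $b_0$ work tapes of $\frak{M}'$. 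Since $|\langle w,1^{g(n)}\rangle|=\Ord(g(n))$, phase~2 runs in time $\Ord(f(g(n)))$, so the total time of $\frak{N}$ is $\Ord(f\circ g(n))$.

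The tape count is the only point that requires care, and I would handle it as the core check. Phase~1 needs $b_g$ tapes; phase~2 needs $1+b_0$ tapes (one for $T^*$, the others for the simulation of $\frak{M}'$). Reusing the phase~1 tapes in phase~2, and letting $T^*$ be shared between the two phases, the number of tapes required at any single moment is at most $\max(b_g,\,b_0+1)=\max(b_0,b_g)+1=k$, as desired. Hence $L\in\DT_k(f\circ g)$. The reverse inclusion $\DT_k(f\circ g)\subseteq \DT(f\circ g)$ is immediate from the definitions, completing the proof.

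The main obstacle is not a deep one but a bookkeeping point: ensuring that the simulated input $\langle w,1^{g(n)}\rangle$ is produced on a work tape of $\frak{N}$ without spending more than one extra tape beyond those needed for the time-constructor and for the simulation of $\frak{M}'$. The freedom to reuse the $b_g$ tapes of $\frak{M}_g$ after phase~1 is what keeps the overall tape count at $\max(b_0,b_g)+1$ rather than $b_0+b_g+1$.
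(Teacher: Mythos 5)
Your overall strategy---pad $L$ to a language $L'$ decidable in time $\Ord(f)$, invoke the hypothesis $\DT(f)=\DT_{b_0}(f)$ to get a $b_0$-tape machine $\frak{M}'$ for $L'$, and then decide $L$ by constructing the padded input on one distinguished work tape and simulating $\frak{M}'$ with that tape playing the role of its input tape---is exactly the paper's, and your tape bookkeeping (reusing the $b_g$ construction tapes in phase~2, arriving at $\max(b_0,b_g)+1$) is fine. The gap is in the choice of padding. The paper pads $w$ to $w'=w\cdot\$^{\,g(|w|)-|w|}$, which has length \emph{exactly} $g(|w|)$, so that in phase~2 the machine $\frak{M}'$ runs on an input of length exactly $g(n)$ and hence takes $\Ord(f(g(n)))$ steps. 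You pad to $\langle w,1^{g(|w|)}\rangle$, whose length is only $\Theta(g(n))$: any pairing forces $|x|\geq |w|+g(|w|)$, and typically $|x|=2n+g(n)+\Ord(1)$. Your step ``since $|\langle w,1^{g(n)}\rangle|=\Ord(g(n))$, phase~2 runs in time $\Ord(f(g(n)))$'' silently uses $f(c\cdot m)=\Ord(f(m))$, i.e., subhomogeneity of $f$, which the lemma does not assume. Concretely, for $f(n)=2^n$ and $|x|=2n+g(n)$, phase~2 costs $\Ord(4^n\cdot 2^{g(n)})$, which is not $\Ord(2^{g(n)})$; and no pairing function can avoid this, since $|x|\geq n+g(n)$ already breaks the bound when $g(n)=n$. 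This is not a cosmetic point in the context of this paper, which deliberately treats bounds that are not subhomogeneous (Section~4 notes that superpolynomial bounds never are). The repair is exactly the paper's device: use additive padding of exact length $g(|w|)$; with that change the rest of your argument, including the tape count, goes through unchanged.
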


\begin{proof}
The lemma is shown by the standard padding technique due to \cite{RF}. So a sketch of the arguments should be sufficient. Let $L$ be accepted by some DTM $\frak{M}$ in time $\Ord(f\circ g(n))$. We put
\[L'= \{w\cdot\$^i : \; i=g(|w|)-|w| \mbox{ and } \frak{M} \mbox{ accepts } w\}. \]
Then $L'\in \DT(f)$, hence it is accepted by a $b_0$-tape DTM $\frak{M}'$ in time $\Ord(f(n))$.
To accept $L$, let a DTM $\frak{M}''$ work as follows: \\
First, the input $w$ is padded to $w'=w\cdot \$^{g(|w|)-|w|}$. This is done in time $\Ord(\,g(|w|)\,)$ by means of $b_g+1$ work tapes, say the first one carries $w'$ finally. Now let this tape correspond to the input tape of $\frak{M}'$ accepting $L'$, and let $\frak{M}''$ simulate $\frak{M}'$ in this way and using $b_0-b_g$ further work tapes if $b_g<b_0$.
Thus, $\frak{M}''$ can be constructed as a $(\max(b_0,b_g)+1)$-tape DTM, and it accepts $L$ in time $\Ord(g(n))+\Ord(f\circ g(n))=\Ord(f\circ g(n))$.
\qed
\end{proof}

\section{Around the gap property}

By a straightforward diagonalization over the DTMs of a fixed tape number, the following separation is obtained, see \cite{Re}:
\bea
(4) \quad \DT_b(f_1) & \subset & \DT_{b+1}(f_2) \en \mbox{ for time bounds $f_1$ and $f_2$ if $f_1\in\ord(f_2)$ and} \\
   & & \hspace*{2.8cm} \mbox{$f_2$ is time-constructible by a $(b+1)$-tape DTM.}
\eea
Borodin's gap theorem \cite{Bo} shows that the condition of time-constructibility of $f_2$ is essential herein.
If we, moreover, assume that $\G(f_1)$ holds and $\DT(f_1)=\DT(f_2)$, fact $(4)$ yields
$\DT_b(f_1)\subset \DT_{b+1}(f_2)\subseteq \DT(f_1)$ for almost all $b\in\N$. So we have
\begin{lemma}
$\G(f)$ implies $\H(f)$ for all time bounds $f$.
\end{lemma}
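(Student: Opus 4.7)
The plan is to unfold the gap property and apply the separation fact $(4)$ to the constructible upper bound it provides. Assume $\G(f)$: fix a time-constructible $f_2$ with $f \in \ord(f_2)$ and $\DT(f) = \DT(f_2)$, and let $b_{f_2}$ be the number of work tapes of a DTM that time-constructs $f_2$.

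First I would note a tape-upgrading remark: if $f_2$ is time-constructible by a $b_{f_2}$-tape DTM, then it is also time-constructible by any $b'$-tape DTM with $b' \geq b_{f_2}$, simply by keeping extra dummy work tapes idle. Consequently, for every $b \in \N$ with $b + 1 \geq b_{f_2}$, the bound $f_2$ is time-constructible by a $(b+1)$-tape DTM, so fact $(4)$ applies with $f_1 := f$ and yields
\[
\DT_b(f) \;\subset\; \DT_{b+1}(f_2).
\]
Combining this with $\DT_{b+1}(f_2) \subseteq \DT(f_2) = \DT(f)$, I obtain $\DT_b(f) \subsetneq \DT(f)$ for all sufficiently large $b$, say for all $b \geq b_1 := \max(0, b_{f_2} - 1)$.

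To extend the strict inclusion to every $b \in \N$, I would use the trivial monotonicity $\DT_b(f) \subseteq \DT_{b_1}(f)$ whenever $b \leq b_1$ (fewer work tapes is a restriction). Together with $\DT_{b_1}(f) \subsetneq \DT(f)$ just established, this gives $\DT_b(f) \subsetneq \DT(f)$ for every $b \in \N$, which is exactly $\H(f)$.

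There is no real obstacle here; the only point requiring care is the bookkeeping of tape numbers when invoking fact $(4)$, since that fact is phrased with the strengthened hypothesis that $f_2$ be time-constructible by a $(b+1)$-tape machine rather than by an arbitrary DTM. Once that hypothesis is secured by the dummy-tape argument for all but finitely many $b$, the remaining small-$b$ cases are handled by monotonicity without any further use of the gap assumption.
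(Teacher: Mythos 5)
Your proof is correct and follows essentially the same route as the paper: the paper also derives $\DT_b(f)\subset\DT_{b+1}(f_2)\subseteq\DT(f_2)=\DT(f)$ from fact $(4)$ for almost all $b$, and your explicit handling of the dummy-tape issue and the finitely many small $b$ via monotonicity merely spells out what the paper leaves implicit in the phrase ``for almost all $b\in\N$''.
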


Also, it holds an upward transfer of $\G(f)$:
\begin{lemma}
If $\G(f)$ and the time bound $g$ is time-constructible, then we have $\G(f\circ g)$.
\end{lemma}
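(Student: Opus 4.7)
The plan is to take the witness $f_2$ provided by $\G(f)$ and verify that $h_2 := f_2\circ g$ witnesses $\G(f\circ g)$. So I would fix a time-constructible $f_2$ with $f\in\ord(f_2)$ and $\DT(f)=\DT(f_2)$, and then work through the three defining conditions for $h_2$.

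First, observe that $h_2$ is a time bound (monotone and $\geq n$, since $g(n)\geq n$ and $f_2(m)\geq m$) and is time-constructible: on an input $w$ of length $n$, first use time-constructibility of $g$ to write $1^{g(n)}$ on a work tape in $\Ord(g(n))$ steps, and then use time-constructibility of $f_2$ applied to that string of length $g(n)$ to write $1^{f_2(g(n))}$ in $\Ord(f_2(g(n)))$ steps. Since the number of tapes is unrestricted in $\G(\cdot)$, combining the two constructors causes no issue. Second, $f\in\ord(f_2)$ and $g(n)\to\infty$ (because $g(n)\geq n$) together give $f(g(n))/f_2(g(n))\to 0$, hence $f\circ g\in\ord(h_2)$.

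The remaining task, $\DT(f\circ g)=\DT(h_2)=\DT(f_2\circ g)$, is the only place where one has to do a little work. The inclusion $\DT(f\circ g)\subseteq\DT(f_2\circ g)$ is immediate from $f\circ g\in\ord(f_2\circ g)$. For the converse I would recycle the padding construction from the proof of Lemma 3. Given $L\in\DT(f_2\circ g)$ accepted by some DTM $\frak{M}$ in time $\Ord(f_2(g(n)))$, form
\[ L' \,=\, \{\,w\$^i\;:\; i=g(|w|)-|w|,\; w\in L\,\}. \]
On an input $u$ of length $N$, a DTM can, using time-constructibility of $g$, check in $\Ord(N)$ steps whether $u$ has the form $w\$^{g(|w|)-|w|}$, and then simulate $\frak{M}$ on $w$ in $\Ord(f_2(g(|w|)))=\Ord(f_2(N))$ steps. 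Thus $L'\in\DT(f_2)=\DT(f)$, so $L'$ is accepted by some DTM in time $\Ord(f(n))$. Reversing the padding (pad $w$ to $w\$^{g(|w|)-|w|}$ in $\Ord(g(|w|))$ steps, then run the DTM for $L'$) decides $L$ in time $\Ord(g(n))+\Ord(f(g(n)))=\Ord(f(g(n)))$, giving $L\in\DT(f\circ g)$.

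There is no real obstacle here; the only point one must be careful about is that $\G(\cdot)$ makes no reference to tape numbers, so the extra tapes spent on padding and time-construction are harmless, and monotonicity of $g$ and $f_2$ is all one needs to pass $\ord$ and constructibility through composition.
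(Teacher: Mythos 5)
Your proof is correct and follows essentially the same route as the paper: the paper's own argument is a one-line sketch that takes $f_2\circ g$ as the witness, invokes standard Ruby--Fischer padding for $\DT(f\circ g)=\DT(f_2\circ g)$, and notes that $f\circ g\in\ord(f_2\circ g)$ and that $f_2\circ g$ is time-constructible. You have merely spelled out the padding details (essentially re-running the proof of Lemma 3 without the tape bookkeeping, which is indeed unnecessary here), and all the steps check out.
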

\begin{proof}
From $\DT(f)=\DT(f_2)$ for some time-constructible $f_2$ satisfying $f\in\ord(f_2)$, by standard padding due to \cite{RF}, it follows $\DT(f\circ g)=\DT(f_2\circ g)$, and we have $f\circ g \in\ord(f_2\circ g)$; moreover $f_2\circ g$ is time-constructible.
\qed
\end{proof}

Now we are going to show that $\G(f_1)$ implies a separation $\DT(f_2)\subset \NT(f_2)$, for certain time bounds $f_2$.
This is indirectly proved. Assuming $\DT(f_2)= \NT(f_2)$, the nondeterministic hierarchy theorem can be applied:
\bea
(5) \quad \NT(f_1) & \subset & \NT(f_2) \en \mbox{ for time bounds $f_1$ and $f_2\;$ if $f_1(n+1)\in\ord(f_2(n))$} \\
   & & \hspace*{2.3cm} \mbox{and $f_2$ is time-constructible, \en see \cite{SFM,Za}.}
\eea
Unfortunately, this is weaker than the nondeterministic analogue of $\neg \,\G(f_1)$: for monotonous $f_1$,
from $f_1(n+1)\in\ord(f_2(n))$ it follows that $f_1(n)\in\ord(f_2(n))$, but the converse does not hold in general. To enforce this, we require:
\bea
(+) \quad \mbox{there is a $c\in\Nplus$ such that $f_1(n+1)\leq c \cdot f_1(n)$ for almost all $n\in\N$.}
\eea
This condition is fulfilled by many functions, e.g., by all polynomials or even by $f_1(n)=\lceil n^r\rceil$, where $r$ is a positive rational number, by $f_1(n)=n\cdot (\log(n))^k$, where $k\in\N$, and also by $f_1(n)=2^n$.

From $(+)$ and $f_1(n)\in\ord(f_2(n))$ it follows $f_1(n+1)\in\ord(f_2(n))$, and fact $(5)$ can be applied.
Assuming, moreover, that $\G(f_1)$ and $\DT(f_2)=\NT(f_2)$ for a related bound $f_2$, we would have
\[\DT(f_1)\subseteq\NT(f_1)\subset\NT(f_2)=\DT(f_2)=\DT(f_1),\]
a contradiction. So we have shown
\begin{proposition}
If $\G(f_1)$ for a time bound $f_1$ satisfying $(+)$, then $\DT(f_2)\not=\NT(f_2)$ for any time-constructible bound
$f_2$ with $f_1\in\ord(f_2)$ and $\DT(f_1)=\DT(f_2)$.
\end{proposition}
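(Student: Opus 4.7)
The plan is to argue by contradiction, following exactly the pattern already indicated in the paragraphs just preceding the proposition. Suppose $\G(f_1)$ holds, $f_1$ satisfies condition $(+)$, $f_2$ is a time-constructible bound with $f_1\in\ord(f_2)$ and $\DT(f_1)=\DT(f_2)$, and assume for contradiction that $\DT(f_2)=\NT(f_2)$. The aim is to derive a chain of inclusions that squeezes $\DT(f_1)$ strictly between two equal classes.

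First I would use condition $(+)$ to upgrade the hypothesis $f_1\in\ord(f_2)$ to the slightly stronger statement $f_1(n+1)\in\ord(f_2(n))$: since $f_1(n+1)\leq c\cdot f_1(n)$ for almost all $n$, the constant factor $c$ is absorbed into the $\ord$ relation, so $f_1(n+1)\in\ord(f_2(n))$ follows. This is the technical step that forces the introduction of $(+)$ in the first place. Once this is in place, the second step is to invoke the nondeterministic time hierarchy, fact $(5)$, together with the time-constructibility of $f_2$, to conclude $\NT(f_1)\subset\NT(f_2)$.

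The third step is assembling the contradiction. We have the trivial containment $\DT(f_1)\subseteq\NT(f_1)$, the strict inclusion $\NT(f_1)\subset\NT(f_2)$ just obtained, the assumed identity $\NT(f_2)=\DT(f_2)$, and the gap identity $\DT(f_2)=\DT(f_1)$ supplied by $\G(f_1)$ together with the hypothesis on $f_2$. Chaining these,
\[
\DT(f_1)\,\subseteq\,\NT(f_1)\,\subset\,\NT(f_2)\,=\,\DT(f_2)\,=\,\DT(f_1),
\]
which is impossible. Hence $\DT(f_2)\neq\NT(f_2)$.

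The only mildly delicate point is the first step, since facts $(5)$ is stated in terms of $f_1(n+1)$, not $f_1(n)$, and without $(+)$ the gap hypothesis $f_1(n)\in\ord(f_2(n))$ need not entail $f_1(n+1)\in\ord(f_2(n))$; this is exactly why condition $(+)$ appears in the hypothesis. Apart from that subtlety, every link in the chain is already a named fact or a direct restatement of the hypotheses, so the argument is essentially bookkeeping.
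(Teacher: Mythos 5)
Your proof is correct and follows exactly the paper's own argument: the paper also argues indirectly, uses $(+)$ to upgrade $f_1\in\ord(f_2)$ to $f_1(n+1)\in\ord(f_2(n))$ so that fact $(5)$ applies, and derives the same contradictory chain $\DT(f_1)\subseteq\NT(f_1)\subset\NT(f_2)=\DT(f_2)=\DT(f_1)$.
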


Thus, $\G(f_1)$ yields separations of determinism from nondeterminism for related bounds $f_2$. Such separations are
 only known for bounds very close to the linear ones, see \cite{PPST,Re}.
  On the other hand, a proof of $\neg\,\G(f_1)$ would also be surprising, since it would mean an essential improvement of the deterministic time hierarchy theorem, at least locally above $f_1$.

\section{Complete sets for subhomogeneous bounds}

So far, complete sets have mainly been considered for complexity classes defined by regular sets of bounds, as for
P, NP, LIN, NLIN etc. The notion of regular set of bounds was introduced in \cite{He}. It ensures a certain robustness of the related complexity classes and useful properties of complete sets of them. The new notion of completeness we are going to introduce now will enable us to generalize Lemma 2 to certain superlinear bounds.

\begin{definition}[$f$-Completeness]
For a time bound $f$, a language $A$ is called $f$-complete if $\,A\in\NT(f)$ and $L\redlin A$ for all $L\in\NT(f)$. Here $\redlin$ means m-reducibility (i.e., many-one reducibility)
via a function computable by a DTM in linear time.
\end{definition}

In order to apply $f$-complete sets, it is useful to require that the bound function fulfils a certain sharpening of
property $(+)$ from the preceding section.
\begin{definition}[Subhomogeneity]
A function $\fct{f}{\N}{\N}$ is said to be subhomogeneous if for any $c\in\Nplus$ there exists a $\overline{c}\in\Nplus$ such that
\[f(c\cdot n) \leq \overline{c}\cdot f(n) \quad \mbox{for all } n\in\Nplus. \]
\end{definition}
Obviously, to ensure this property, for a monotonous (non-decreasing) function $f$ it is sufficient to require that there is some $c'\in\Nplus$ satisfying
\[f(2\cdot n) \leq c'\cdot f(n) \quad \mbox{for all } n\in\Nplus. \]
This condition was already used in \cite{Ka}. There it was also observed that {\it superpolynomial}\/ functions $f$, i.e., such ones with $\lim_{n\to \infty}\frac{f(n)}{n^k} =\infty$ for all $k\in\N$, cannot be subhomogeneous. On the other hand, constant functions and functions like $f(n)=\lceil n^r\rceil$, for positive rationals $r$, or $\log(n)$ and $\log^*(n)$ are obviously subhomogeneous. If non-decreasing functions $f_1$ and $f_2$ are subhomogeneous, then so are $f_1+f_2$, $f_1\cdot f_2$, $f_2\circ f_1$, and $\max(f_1,f_2)$. This is easily proved. For example, if $f_1(c\cdot n) \leq \overline{c}_1\cdot f(n)$ and $f_2(\overline{c}_1\cdot m) \leq \overline{c}_2'\cdot f(m)$ for all $n,m\in\Nplus$,
then
\[(f_2\circ f_1)(c\cdot n)= f_2(f_1(c\cdot n))\leq f_2(\overline{c}_1\cdot f_1(n))
   \leq \overline{c}_2'\cdot f_2(f_1(n)) =
  \overline{c}_2'\cdot(f_2\circ f_1)(n).\]

The following lemma demonstrates the usefulness of the notion of $f$-completeness.
\begin{lemma}
Let $A$ be an $f$-complete language for a subhomogeneous time bound $f$. Then
\bea
\DT(f)=\NT(f) & \mbox{ iff } &  A\in\DT(f), \quad \mbox{ and}\\
\NT(f)=\co\NT(f) & \mbox{ iff } &  A\in\co\NT(f).
\eea
\end{lemma}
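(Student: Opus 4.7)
The plan is to observe that the two ``only if'' directions are immediate and then to prove both ``if'' directions by one uniform composition argument in which subhomogeneity absorbs the constant expansion of a linear-time reduction.

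For the ``only if'' directions, note that $A\in\NT(f)$ by hypothesis. Hence $\DT(f)=\NT(f)$ gives $A\in\DT(f)$, and $\NT(f)=\co\NT(f)$ gives $A\in\co\NT(f)$.

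For the nontrivial directions I would first record a composition principle. Suppose $L\redlin A$ via $\varrho$ computed by a linear-time DTM, so that $|\varrho(w)|\leq c_\varrho\cdot|w|$ for some $c_\varrho\in\Nplus$, and suppose further that $A$ (resp.\ $\bar A$) is accepted by some DTM (resp.\ NTM) $\frak{M}_A$ in time $\Ord(f(n))$. Pre-composing $\frak{M}_A$ with the transducer for $\varrho$ yields a machine in the same mode that decides $L$ (resp.\ $\bar L$) in time
\[ \Ord(|w|) + \Ord\bigl(f(c_\varrho\cdot|w|)\bigr). \]
By subhomogeneity of $f$ there is $\overline{c_\varrho}\in\Nplus$ with $f(c_\varrho\cdot n)\leq \overline{c_\varrho}\cdot f(n)$, and since $f(n)\geq n$ the total is $\Ord(f(|w|))$. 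Hence $L$ (resp.\ $\bar L$) lies in $\DT(f)$ (resp.\ $\NT(f)$).

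Now assume $A\in\DT(f)$ and take any $L\in\NT(f)$. By $f$-completeness, $L\redlin A$, so the composition principle places $L$ in $\DT(f)$; thus $\NT(f)\subseteq\DT(f)$ and the first equivalence is proved. For the second, assume $A\in\co\NT(f)$, so $\bar A\in\NT(f)$. Given $L\in\NT(f)$, $L\redlin A$ via some $\varrho$, and composing the NTM for $\bar A$ with $\varrho$ witnesses $\bar L\in\NT(f)$, i.e.\ $L\in\co\NT(f)$. Conversely, given $L\in\co\NT(f)$ we have $\bar L\in\NT(f)$ and hence $\bar L\redlin A$ via some $\varrho$; because $w\in L\iff \varrho(w)\in\bar A$ and $\bar A\in\NT(f)$, the same composition shows $L\in\NT(f)$. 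Thus $\NT(f)=\co\NT(f)$.

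The only real technical point is the subhomogeneity step: without the bound $f(c_\varrho n)\in\Ord(f(n))$ the constant-factor expansion of the reduction could push the composed machine out of $\Ord(f)$, exactly the obstruction that prevented the argument of Lemma~2 from extending beyond linear bounds. Everything else is routine composition of machines and the trivial inclusions $\DT(f)\subseteq\NT(f)$ and $\NT(f)\supseteq\co\NT(f)$ or $\subseteq$ as appropriate.
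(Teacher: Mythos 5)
Your proof is correct and follows essentially the same route as the paper: the forward directions are immediate from $A\in\NT(f)$, and the reverse directions come from composing the linear-time reduction with an $\Ord(f)$ acceptor, with subhomogeneity absorbing the constant blow-up $f(c\cdot|w|)\leq\overline{c}\cdot f(|w|)$. The paper merely states the second equivalence "follows analogously," so your explicit treatment of the $\co\NT(f)$ case is a harmless elaboration of the same argument.
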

\begin{proof}
We show the first assertion, the second one follows analogously. The direction ``$\,\Rightarrow\,$'' is trivial. To prove ``$\,\Leftarrow\,$'', let $A\in\DT(f)$ and $L\in\NT(f)$. If $L\redlin A$ via an m-reduction $\varrho$ and the characteristic function of $A$, $\chi_A$, is deterministically computable in time $\Ord(f)$, then the characteristic function of $L$, $\chi_L=\chi_A\circ\varrho$, can deterministically be computed in time $\Ord(f)$, too.
This holds since $\varrho$ is computable in linear time and $f$ is subhomogeneous: we have
$|\varrho(w)|\leq c\cdot |w|$ for any input word $w$, with a constant $c$, and
$f(|\varrho(w)|)\leq f(c\cdot |w|)\leq \overline{c}\cdot f(|w|)$, with a suitable constant $\overline{c}$.
\qed
\end{proof}

An obvious supplement will be important for the proof of Theorem 1 below:
\begin{lemma}
If, in the above proof of Lemma 6, $\varrho$ is computable by a $1$-tape DTM in linear time and $A\in\DT_{b_0}(f)$, then $\DT(f)=\NT(f)=\DT_{b_0+1}(f)$.
\end{lemma}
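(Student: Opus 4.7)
The plan is to strengthen the composition argument of Lemma 6 by carefully tracking the tape count. Since $A\in\DT_{b_0}(f)\subseteq \DT(f)$, Lemma 6 already gives $\DT(f)=\NT(f)$, and trivially $\DT_{b_0+1}(f)\subseteq\DT(f)$. So it suffices to prove the inclusion $\NT(f)\subseteq\DT_{b_0+1}(f)$.

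First I would fix an arbitrary $L\in\NT(f)$. By $f$-completeness of $A$ there is a linear-time many-one reduction $\varrho$ from $L$ to $A$; by the hypothesis of the statement this $\varrho$ is actually computed by a $1$-tape DTM $\frak{M}_\varrho$ in linear time, while $A$ is decided by some $b_0$-tape DTM $\frak{M}_A$ in time $\Ord(f)$.

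Second I would assemble a $(b_0+1)$-tape DTM $\frak{M}$ deciding $L$ as follows. On input $w$, $\frak{M}$ runs $\frak{M}_\varrho$ using its first work tape (and reading from the ordinary input tape), producing $\varrho(w)$ of length $\leq c\cdot|w|$ in $\Ord(|w|)$ steps. When $\frak{M}_\varrho$ halts, $\frak{M}$ rewinds this first work tape and, from then on, treats it as the read-only input tape of $\frak{M}_A$; the original input tape of $\frak{M}$ is no longer consulted, and the remaining $b_0$ work tapes serve as the work tapes of $\frak{M}_A$. Since $\varrho$ is an m-reduction, this machine accepts exactly $L$, and it uses in total $b_0+1$ work tapes.

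Finally I would read off the time bound using subhomogeneity: picking $\overline{c}$ with $f(c\cdot n)\leq \overline{c}\cdot f(n)$, the simulation of $\frak{M}_A$ on $\varrho(w)$ takes $\Ord(f(c\cdot|w|))\subseteq\Ord(f(|w|))$ steps, and the preprocessing by $\frak{M}_\varrho$ takes only $\Ord(|w|)\subseteq\Ord(f(|w|))$ steps, so $\frak{M}$ runs in time $\Ord(f(n))$. Hence $L\in\DT_{b_0+1}(f)$ and the desired chain of equalities follows. The one point requiring care — and the only real obstacle — is that the output tape of $\frak{M}_\varrho$ must double as the input tape of $\frak{M}_A$ without incurring an extra work tape; this is precisely where the single-tape hypothesis on $\varrho$ is needed, because a multi-tape computation of $\varrho$ would leave its output spread across several tapes and force an additional copy step (and hence an additional tape) before simulating $\frak{M}_A$.
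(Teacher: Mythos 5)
Your proof is correct and is exactly the argument the paper intends: the lemma is stated there as an ``obvious supplement'' to Lemma 6 with no written proof, and the intended justification is precisely the composition of the $1$-tape reduction machine with the $b_0$-tape decider for $A$ (reusing the reduction's single work tape as the simulated input tape), just as in the paper's proof of Lemma 2. Your accounting of the tape count and the use of subhomogeneity for the time bound match what the paper relies on, so nothing is missing.
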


\section{Existence of $f$-complete sets}

In order to ensure the existence of $f$-complete sets, a further condition is employed.
\begin{definition}[Superhomogeneity and Quasihomogeneity]
A function $\fct{f}{\N}{\N}$ is said to be superhomogeneous if for any $c\in\Nplus$ there exists a $c'\in \Nplus$ such that
\[ c\cdot f(n) \leq f(c'\cdot n) \quad \mbox{for all } n\in\Nplus.  \]
A function is called {\it quasihomogeneous}\/ if it is both subhomogeneous and superhomogeneous.
\end{definition}

Examples of superhomogeneous functions are the linear functions, the exponential function $f(n)=2^n$, as well as
the functions $f(n)=\lceil n^r\rceil$ for positive rationals $r$. Constant functions and other bounded functions as well as $\log(n)$ are not superhomogeneous. If non-decreasing functions $f_1(n),f_2(n)\geq 1$ are superhomogeneous, then so are $f_1+f_2$, $f_1\cdot f_2$, and $f_2\circ f_1$. If only $f_1$ is superhomogeneous and $f_2$ is non-decreasing, then $f_1\cdot f_2$ is superhomogeneous, too. All these assertions can easily be shown.
For example, if $c\cdot f_1(n) \leq f_1(c'\cdot n)$ with $c'\in\Nplus$, then for any non-decreasing $f_2$ we have
\[c\cdot (f_1\cdot f_2)(n)= c\cdot f_1(n)\cdot f_2(n)\leq f_1(c'\cdot n)\cdot f_2(n)\leq
f_1(c'\cdot n)\cdot f_2(c'\cdot n)=(f_1\cdot f_2)(c'\cdot n).\]

Thus, by the examples and remarks after Definitions 4 and 5, the property of quasihomogeneity applies to a lot of time bounds, among them the polynomials and several others between the linear functions and the polynomials.

\begin{proposition}
If the time bound $f$ is superhomogeneous and time-constructible, then there is an $f$-complete language that also satisfies the supposition of Lemma 7.
\end{proposition}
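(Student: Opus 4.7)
Plan: I mimic the construction of the set $V$ in the proof of Lemma 2, but calibrate the padding to $f$. Specifically, I propose
\[
A = \{\langle w, \code(\frak{M})^t\rangle :\, w\in\{0,1\}^*,\, |w|\leq t\in\Nplus,\, \frak{M}\text{ is a 2-tape NTM accepting }w\text{ in }\leq f(t)\text{ steps}\}.
\]
The key idea is that the $t$-fold padding $\code(\frak{M})^t$ forces the total input length to grow linearly with $t$, so the simulation budget $f(t)$ for $\frak{M}$ fits inside the ambient time bound $f(n)$ for the verifier.

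To see $A\in\NT(f)$, I let an NTM parse off $w$ and one copy of $\code(\frak{M})$ from its input of length $n$, and count the number of subsequent copies of $\code(\frak{M})$ to recover $t$ (linear-time operations using $O(1)$ work tapes). Since $t\leq n$ and $f$ is non-decreasing, I exploit time-constructibility of $f$ to mark a tally of length $f(t)\leq f(n)$ on a work tape in time $\Ord(f(n))$, and then nondeterministically simulate $\frak{M}$ on $w$ for at most $f(t)$ steps, accepting exactly when the simulation does. The overall time is $\Ord(f(n))$, hence $A\in\NT(f)$.

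For $\redlin$-hardness, let $L\in\NT(f)$ be arbitrary and fix a 2-tape NTM $\frak{M}_L$ accepting $L$ with $t_{\frak{M}_L}(n)\leq c_L\cdot f(n)$. By superhomogeneity of $f$, I choose $c'_L\in\Nplus$ with $c'_L\geq 1$ such that $c_L\cdot f(n)\leq f(c'_L\cdot n)$ for all $n\in\Nplus$, and define (after a harmless adjustment to cover $|w|=0$, e.g.\ replacing $c'_L|w|$ by $c'_L(|w|+1)$)
\[
\varrho(w) = \langle w, \code(\frak{M}_L)^{c'_L\cdot |w|}\rangle.
\]
Since both $c'_L$ and $\code(\frak{M}_L)$ are fixed constants, $|\varrho(w)|\in\Ord(|w|)$, and $\varrho$ can be computed by a 1-tape DTM in linear time: scan $w$ once copying it to the work tape, then sweep the work tape $c'_L$ times, emitting $\code(\frak{M}_L)$ for each scanned symbol. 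Correctness is immediate: $w\in L$ iff $\frak{M}_L$ accepts $w$ within $c_L\cdot f(|w|)\leq f(c'_L\cdot |w|)$ steps, which is exactly the $A$-membership condition with $t=c'_L\cdot |w|\geq |w|$. Thus $L\redlin A$ via a 1-tape linear-time DTM, so $A$ satisfies the hypothesis of Lemma 7.

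The main obstacle is reconciling two competing demands: the reduction $\varrho$ must run in linear time (so its output has length only $\Ord(|w|)$), yet the padding must be generous enough that the simulation budget $f(t)$ at the verifier swallows the entire $c_L\cdot f(|w|)$-step run of $\frak{M}_L$. Superhomogeneity is precisely the instrument that reconciles these requirements, by letting a constant-factor blow-up of the input scale the time bound by the arbitrary factor $c_L$. Time-constructibility of $f$ plays only the auxiliary role of supplying the step-counter during the verification of $A$-membership.
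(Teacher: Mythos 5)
Your hardness half is essentially the paper's argument (pad by a factor $c'_L$ supplied by superhomogeneity so that the machine's running time fits under the budget attached to the padded input), and it works. The genuine gap is in the membership claim $A\in\NT(f)$. A universal NTM simulating an arbitrary $2$-tape NTM $\frak{M}$ cannot do so with constant overhead per step uniformly over all $\frak{M}$: looking up a transition costs on the order of $|\code(\frak{M})|$ per simulated step, so your verifier runs in time $\Theta(|\code(\frak{M})|\cdot f(t))$, and the implicit constant in ``$\Ord(f(n))$'' must not depend on $\frak{M}$. You would need $|\code(\frak{M})|\cdot f(t)\leq C\cdot f(n)$ for a universal $C$, knowing only $n\geq t\cdot|\code(\frak{M})|$. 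Superhomogeneity does not give this: it yields, for $c=|\code(\frak{M})|$, \emph{some} $c'$ with $c\cdot f(t)\leq f(c'\cdot t)$, but $c'$ is not controlled by $c$, so there is no guarantee that $c'\cdot t\leq n$. Indeed one can build non-decreasing, superhomogeneous, time-constructible bounds $f$ (with $f(n)/n$ oscillating: rising to unbounded heights and then decaying slowly back toward $1$) for which $\sup_{c,t} \,c\cdot f(t)/f(c\cdot t)=\infty$; for such $f$ your language $A$ is not obviously in $\NT(f)$, and your argument does not place it there.

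The paper's construction dodges exactly this point by making the needed inequality part of the language: a padded word $w_{\frak{M}}^a$ belongs to $V_f$ only if its total length equals $c'\cdot|w|$ for some $c'$ with $|\code(\frak{M})|\cdot f(|w|)\leq f(c'\cdot|w|)$. The verifier first checks this arithmetic condition (using time-constructibility of $f$), and only then simulates; when it does simulate, the cost $|\code(\frak{M})|\cdot f(|w|)$ is bounded by $f$ of the \emph{whole} input length by fiat. Hardness is unharmed because the reduction, knowing $\frak{M}_L$ in advance, can always pad enough (by superhomogeneity) to make the condition true. To repair your proof you should either add an analogous clause to the definition of $A$ (e.g.\ admit $\langle w,\code(\frak{M})^t\rangle$ only when $|\code(\frak{M})|\cdot f(t)\leq f(n)$ for $n$ the input length, a condition the verifier can test before simulating) or restrict to bounds satisfying $c\cdot f(t)\leq \Ord(f(c\cdot t))$ uniformly, which is a strictly stronger hypothesis than superhomogeneity.
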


\begin{proof}
Let $f$ be a time-constructible superhomogeneous time bound. Without loss of generality, we suppose that $f(n)\geq n+1$ for all $n\in\N$. As in the proof of Lemma 2, we restrict the complexity classes to languages $L\subseteq\{0,1\}^*$ and denote the (binary) encoding of an NTM $\frak{M}$ by $\code(\frak{M})$. For an input word $w=x_1x_2\,\ldots\,x_l$, let $\;\overline{w}=\,0x_10x_20\,\ldots\, 0x_l$ and, for any $a\in\N$,
\[ w_{\frak{M}}^a= \:\overline{\code(\frak{M})}\, 1\, w\, 0\, 1^a\,. \]

 An $f$-complete language is defined by
 \bea
 V_f & = & \{ w_{\frak{M}}^a\,:\; w\in\{0,1\}^*,\en \frak{M} \mbox{ is a $2$-tape NTM,} \en a\in\N, \\
  & & \hspace*{1.3cm} |w_{\frak{M}}^a|= c'\cdot |w| \mbox{ for some $c'\in\Nplus$ such that }
   |\code(\frak{M})|\cdot f(|w|) \leq f(c'\cdot |w|), \\
  & & \hspace*{1.3cm} \mbox{and $\,\frak{M}\,$ accepts $w$ in $\leq f(|w|)$ steps}\, \}.
  \eea

 One easily shows that $V_f\in\NT(f)$: Given an input $w'\in\{0,1\}^*$, it is first checked whether $w'=w_{\frak{M}}^a$ for a $w\in\{0,1\}^*$, $a\in\N$ and some $2$-tape NTM $\frak{M}$ such that $|w_{\frak{M}}^a|= c'\cdot |w|$  for some $c'\in\Nplus$.
 Since $f$ is time-constructible and non-decreasing, this check including the computation of $c'$ can be done in time $\Ord(f(|w'|))$. Having $c'$, the condition $\,|\code(\frak{M})|\cdot f(|w|) \leq f(c'\cdot |w|)\,$ is checked in
$\Ord(f(|w'|))$ further steps. Finally, the (nondeterministic) simulation of $f(|w|)$ steps of $\frak{M}$ on the input $w$ is possible in  $|\code(\frak{M})|\cdot f(|w|) \leq f(c'\cdot |w|)=f(|w'|)$ steps.

To show the hardness of $V_f$, let $L\in\NT(f)$ and $L\in\{0,1\}^*$. By the linear speed up and the tape-number reduction of NTMs, there is a $2$-tape NTM $\frak{M}_L$ accepting  $L$ in time $t_{\frak{M}_L}(n)\leq f(n)$.
Since $f$ is superhomogeneous, there is a constant $c'_L\in\Nplus$ such that
 $|\code(\frak{M}_L)|\cdot f(n) \leq f(c'_L\cdot n)$ for all $n\in\Nplus$. Without loss of generality, let $c'_L\geq 2\cdot|\code(\frak{M})|+3$. The word function $\fct{\varrho_L}{\{0,1\}^*}{\{0,1\}^*}$ defined by
 \[ w \; \stackrel{\varrho_L}{\longmapsto} \; w_{\frak{M}}^a\, , \en \mbox{ where } a= c'_L\cdot|w| - 2\cdot |\code(\frak{M}_L)|-|w| -2\, , \]
is an m-reduction of $L$ to $V_f$, and it is computable by a $1$-tape DTM in linear time.
\qed
\end{proof}

Now we are well prepared to state our main result.

\begin{theorem}
If the time bound $f$ is time-constructible and quasihomogeneous, then $\H(f)$ implies the separation
$\DT(f)\not= \NT(f)$.
\end{theorem}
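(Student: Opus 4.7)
The plan is to argue the contrapositive: assume $\DT(f)=\NT(f)$ and derive $\neg\H(f)$, i.e., $\DT(f)=\DT_{b_0+1}(f)$ for some $b_0\in\N$. The three preceding results (Proposition 2, Lemma 6, Lemma 7) have been designed exactly to make this argument run, so the proof should be quite short; the only care needed is to verify the two halves of quasihomogeneity are invoked at the right places.

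First I would note that quasihomogeneity of $f$ encompasses both superhomogeneity and subhomogeneity, and that $f$ is time-constructible by hypothesis. Thus Proposition 2 applies and yields an $f$-complete language $V_f$ such that every $L\in\NT(f)$ reduces to $V_f$ via an m-reduction $\varrho_L$ computable by a $1$-tape DTM in linear time. This is precisely the situation considered in Lemma 7.

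Next I would use the subhomogeneity half to feed into Lemma 6: assuming $\DT(f)=\NT(f)$, Lemma 6 gives $V_f\in\DT(f)$. Hence there is some $b_0\in\N$ with $V_f\in\DT_{b_0}(f)$. Applying Lemma 7 (with the same $V_f$ and the linear-time $1$-tape reductions supplied by Proposition 2) we obtain
\[ \DT(f)=\NT(f)=\DT_{b_0+1}(f), \]
which contradicts $\H(f)$. This establishes the contrapositive and hence the theorem.

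The main conceptual hurdle has already been absorbed into the preparatory lemmas; the delicate point is merely checking that Proposition 2 actually produces a reduction meeting the stronger $1$-tape linear-time hypothesis of Lemma 7 (superhomogeneity gave the padding constant $c'_L$, and the reduction $\varrho_L$ there is explicitly a simple padding computable on one work tape). No further routine calculation is required, so the proof reduces to stringing together Proposition 2, Lemma 6, and Lemma 7.
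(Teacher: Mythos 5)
Your proposal is correct and follows essentially the same route as the paper: the paper's proof is exactly the indirect argument via Proposition 2 and Lemma 7, and your invocation of Lemma 6 is just the trivial observation that $\DT(f)=\NT(f)$ puts $V_f$ into some $\DT_{b_0}(f)$. Your remarks on where sub- and superhomogeneity enter match the roles they play in Lemmas 6/7 and Proposition 2, respectively.
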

\begin{proof}
We conclude indirectly: from $\DT(f)= \NT(f)$, by Proposition 2 and Lemma 7, it would follow that $\DT(f)=\DT_{b_0+1}(f)$ for some $b_0\in\N$, hence $\neg\,\H(f)$.
\qed
\end{proof}

It might be of interest to note that the proof of Proposition 2 can be modified to obtain languages that are complete in space complexity classes, with respect to linear-time reduction. More precisely, employing the usual notations, we have
\begin{corollary}
If $f$ is a superhomogeneous and time-constructible time bound, there is a language
$V\!S_f\in \mbox{\rm NSPACE}(f)$ such that $L\redlin V\!S_f$ for all $L\in\mbox{\rm NSPACE}(f)$.
\end{corollary}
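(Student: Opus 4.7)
The plan is to mimic the proof of Proposition 2 with ``time'' replaced by ``space'' throughout. I would define
\bea
V\!S_f & = & \{\,w_{\frak{M}}^a : \; w\in\{0,1\}^*,\; \frak{M} \mbox{ is a $2$-tape NTM},\; a\in\N, \\
 & & \hspace*{1.3cm} |w_{\frak{M}}^a|=c'\cdot|w| \mbox{ for some } c'\in\Nplus \mbox{ with } |\code(\frak{M})|\cdot f(|w|)\leq f(c'\cdot|w|), \\
 & & \hspace*{1.3cm} \mbox{and } \frak{M} \mbox{ accepts } w \mbox{ within space } \leq f(|w|)\,\},
\eea
re-using the same encoding $w_{\frak{M}}^a = \overline{\code(\frak{M})}\,1\,w\,0\,1^a$ as before.

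First I would verify $V\!S_f\in\NS(f)$. On input $w'$, time-constructibility of $f$ lets a universal NTM reserve a budget of $f(|w'|)$ tape cells (by writing $1^{f(|w'|)}$ in time, hence in space, $\Ord(f(|w'|))$). Within this budget, parsing $w'=w_{\frak{M}}^a$, recovering $c'$, and verifying $|\code(\frak{M})|\cdot f(|w|)\leq f(c'\cdot|w|)$ are all feasible. Finally, the nondeterministic simulation of $\frak{M}$ on $w$ uses at most $|\code(\frak{M})|\cdot f(|w|)\leq f(c'\cdot|w|)=f(|w'|)$ cells of the universal NTM.

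For the hardness part, given $L\in\NS(f)$, standard tape-number reduction together with tape compression for nondeterministic space provides a $2$-tape NTM $\frak{M}_L$ accepting $L$ in space $\leq f(n)$. Superhomogeneity of $f$ then yields $c'_L\in\Nplus$ (chosen large enough, e.g.\ $c'_L\geq 2\cdot|\code(\frak{M}_L)|+3$) such that $|\code(\frak{M}_L)|\cdot f(n)\leq f(c'_L\cdot n)$. The map $w\mapsto w_{\frak{M}_L}^a$ with $a=c'_L\cdot|w|-2\cdot|\code(\frak{M}_L)|-|w|-2$ is then an m-reduction of $L$ to $V\!S_f$, computable by a $1$-tape DTM in linear time, exactly as in the proof of Proposition 2.

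The main obstacle will be the space bookkeeping for the universal NTM. In the time case the factor $|\code(\frak{M})|$ enters as the per-step cost of universal simulation; in the space case it appears as an additive term together with a constant multiplicative factor that is absorbed by tape compression, so the padding condition gives more than enough slack. I would be explicit that superhomogeneity of $f$ is precisely what makes this padding-based hardness argument lift from the time setting of Proposition 2 to the space setting required here.
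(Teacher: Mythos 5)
Your proposal is correct and takes essentially the same route as the paper, which defines $V\!S_f$ by replacing the time bound in $V_f$ with the space bound ``accepts $w$ on $\leq f(|w|)$ cells'' and then refers back to the proof of Proposition 2 for the remaining details. The only cosmetic difference is the choice of normal form: the paper encodes single-tape NTMs (one tape carrying both input and work), whereas you use $2$-tape NTMs with tape compression; both are legitimate for $\NS(f)$ and the padding condition $|\code(\frak{M})|\cdot f(|w|)\leq f(c'\cdot|w|)$ covers the simulation overhead in either case.
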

\begin{proof}
We put
 \bea
 V\!S_f & = & \{ w_{\frak{M}}^a\,:\; w\in\{0,1\}^*,\en \frak{M} \mbox{ is a single-tape NTM,} \en a\in\N, \\
  & & \hspace*{1.3cm} |w_{\frak{M}}^a|= c'\cdot |w| \mbox{ for some $c'\in\Nplus$ such that }
   |\code(\frak{M})|\cdot f(|w|) \leq f(c'\cdot |w|), \\
  & & \hspace*{1.3cm} \mbox{and $\,\frak{M}\,$ accepts $w$ on $\leq f(|w|)$ cells}\, \}.
  \eea
Herein, by a {\it single-tape TM}\/, we mean a usual TM with only one tape, on which the input is given and the work has to be carried out. The remaining part is analogous to the proof of Proposition 2.
\qed
\end{proof}

If, moreover, $f$ is even quasihomogeneous, the deterministic and nondeterministic linear time classes relativized to  $V\!S_f$ coincide. More precisely, we have
 LIN$^{\mbox{\footnotesize $V\!S_f$}}=\,$NLIN$^{\mbox{\footnotesize $V\!S_f$}}=\,$NSPACE$(f)$. The proof is quite analogous to that of Proposition 15 in \cite{He}.

\section{Conclusion and relativizations}

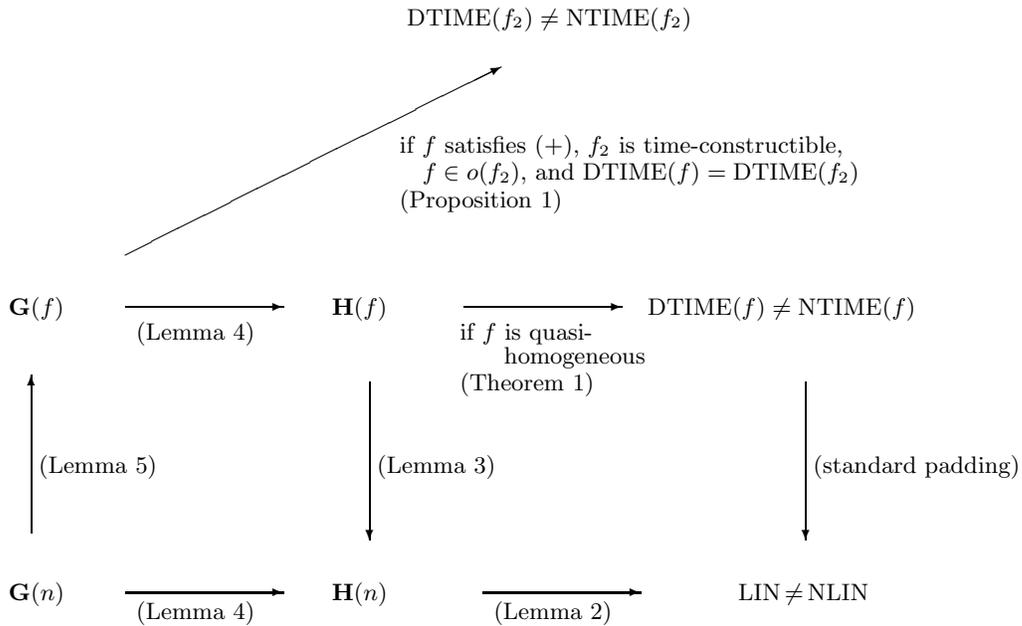
\begin{figure}
\unitlength1cm
\begin{picture}(16,8.2)
\put(0.75,0.4){\begin{picture}(13,7.7)
 \put(0.2,0.2){$\G(n)$}
 \put(4.5,0.2){$\H(n)$}
 \put(9.9,0.2){LIN$\,\not=\,$NLIN}
 \put(0.2,4){$\G(f)$}
 \put(4.5,4){$\H(f)$}
 \put(8.7,4){$\DT(f)\not=\NT(f)$}
 \put(6.2,3.675){$\mbox{\small if $f$ is quasi-}$}
 \put(6.79,3.35){$\mbox{\small homogeneous}$}
 \put(6.2,3){$\mbox{\small (Theorem 1)}$}
 \put(5.5,7.85){$\DT(f_2)\not=\NT(f_2)$}
 \put(5.4,6.15){$\mbox{\small if $f$ satisfies $(+)$, $f_2$ is time-constructible,}$} %
 \put(5.7,5.8){$\mbox{\small $f\in o(f_2)$, and $\DT(f)=\DT(f_2)$ }$}
 \put(5.4,5.425){$\mbox{\small (Proposition 1) }$}

 \put(1.75,0.3){\vector(1,0){2.1}}
 \put(1.9,-0.05){$\mbox{\small (Lemma 4)}$}
 \put(6.5,0.3){\vector(1,0){2.1}}
 \put(6.65,-0.05){$\mbox{\small (Lemma 2)}$}
 \put(1.75,4.1){\vector(1,0){2.1}}
 \put(1.9,3.675){$\mbox{\small (Lemma 4)}$}
 \put(6.25,4.1){\vector(1,0){2.1}}
 \put(1.75,4.8){\vector(2,1){5}}  %
 \put(0.5,1.1){\vector(0,1){2.1}}
 \put(0.6,1.9){$\mbox{\small (Lemma 5)}$}
 \put(5,3.1){\vector(0,-1){2.1}}
 \put(5.1,1.9){$\mbox{\small (Lemma 3)}$}
 \put(10.8,3.1){\vector(0,-1){2.1}}
 \put(10.9,1.9){$\mbox{\small (standard padding)}$}
\end{picture}}
\end{picture}
\normalsize
\caption{ Main results (for time-constructible bounds $f\,$)}
\end{figure}

Figure 1 summarizes the main implications we have shown. Herein let $f$ be any time-constructible time bound.
The labels at the arrows hint to the corresponding results within this paper and recall some necessary suppositions. The implication indicated by the rightmost vertical arrow, that $\DT(f)\not=\NT(f)$ implies LIN$\,\not=\,$NLIN,
is well-know. It follows by standard padding due to \cite{RF}: LIN$=$ NLIN would yield $\DT(f)=\NT(f)$. Notice that LIN$\,\not=\,$NLIN is the only claim occurring in the figure that was already proved, see \cite{PPST}.

\begin{proposition}
All the implications shown in Figure 1 relativize, i.e., they remain valid in their relativized form with respect to an arbitrary ocacle $A$.
\end{proposition}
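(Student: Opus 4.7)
\medskip

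\noindent\textbf{Proof plan.} The strategy is to revisit each of the eight arrows in Figure 1 and check that its argument remains valid when every Turing machine is replaced by an oracle TM consulting a fixed but arbitrary set $A$, and every class $\DT_b$, $\DT$, $\NT$ by its relativized counterpart $\DT^A_b$, $\DT^A$, $\NT^A$. Since no clause of the proposition inspects the internal structure of $A$, one merely needs to verify that no step of any original argument secretly exploited the absence of an oracle.

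The first group of arrows --- Lemmas 3, 4 and 5, together with the rightmost ``standard padding'' arrow --- goes through by inspection. The padding constructions from the proofs of Lemmas 3 and 5 never invoke the oracle, and the simulating machine keeps its oracle access unchanged when passing between the original and the padded language, so the Ruby--Fischer padding carries over without change. The diagonal construction underlying fact $(4)$ also relativizes, because a universal $(b+1)$-tape oracle DTM can simulate any $b$-tape oracle DTM with the same oracle at a constant slow-down independent of $A$. For the diagonal arrow labelled Proposition 1 one appeals to the nondeterministic time hierarchy theorem (fact $(5)$), which is well known to relativize via the usual tally-coding diagonalization of \cite{SFM,Za}.

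The crux is to relativize Lemma 2 and Theorem 1, both of which rest on the existence of an $\NT(f)$-complete language under linear-time m-reductions. One defines $V^A$ and $V_f^A$ by verbatim copies of the definitions of $V$ and $V_f$, except that the encoded $\frak{M}$ is now a $2$-tape nondeterministic oracle TM and the acceptance clause reads ``$\frak{M}^A$ accepts $w$ within the stated time bound''. The reductions $\varrho$ in the proof of Lemma 2 and $\varrho_L$ in the proof of Proposition 2 are computed by $1$-tape DTMs that never consult the oracle, so they remain linear-time m-reductions in the relativized setting. To see that $V^A\in\NT^A(n)$ and $V_f^A\in\NT^A(f)$ one observes that whenever a simulated step of $\frak{M}$ issues an oracle query, the simulator copies the query onto its own oracle tape and consults $A$ directly; this contributes only a constant factor per simulated step, which is already absorbed by the factor $|\code(\frak{M})|$ appearing in the inequality $|\code(\frak{M})|\cdot f(|w|)\leq f(c'\cdot|w|)$.

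With these relativized complete sets in hand, Lemmas 6 and 7 transfer verbatim --- their proofs are purely combinatorial manipulations of m-reductions and of subhomogeneity bounds and do not refer to the internals of the machines --- so the indirect argument of Theorem 1 yields the relativized implication $\H^A(f)\Rightarrow \DT^A(f)\neq\NT^A(f)$. The main obstacle is therefore confined to the routine but careful verification that the oracle-simulation overhead inside $V_f^A$ fits within the budget $|\code(\frak{M})|\cdot f(|w|)$; once this single bookkeeping point is settled, all remaining arrows in Figure 1 relativize by mere inspection of their proofs, and Proposition 3 is complete.
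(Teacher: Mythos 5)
Your proposal is correct and follows essentially the same route as the paper's appendix: checking each arrow separately, observing that padding and the diagonalizations behind facts $(4)$ and $(5)$ relativize, defining $V^A$ and $V_f^A$ with oracle NTMs, and noting the key point that the reductions $\varrho$ and $\varrho_L$ remain \emph{unrelativized} $1$-tape linear-time maps so that Lemmas 6 and 7 carry over. The only cosmetic difference is that the paper packages the latter observation into an explicit notion of $(f,A,1)$-completeness.
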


For a discussion of the role of relativization, we refer to \cite{Fo}.
The proof of Proposition 3 is rather straightforward. So the details will be given in an appendix. There we re-formulate the main results in their relativized versions and deal with the related adaptations of proof arguments.

  Thus, for any oracle $A$ satisfying LIN$^A=$NLIN$^A=\DT^A(n)$, as it was given by Proposition 15 in \cite{He}, it follows that both $\neg\,\H^A(n)$ and $\neg\,\G^A(n)$ hold, as well as $\neg\,\H^A(f)$ and $\neg\,\G^A(f)$ for any time-constructible bound $f$. Hence $\G(f)$ and $\H(f)$ do not relativize if they hold.

Since $\G(n)$ has strong consequences, e.g., $\H(f)$ and $\DT(f)\not=\NT(f)$ for all quasihomogeneous time-constructible $f$, a proof of $\G(n)$ seems to lie beyond the present abilities.
Unfortunately, we even have to leave open how to construct an oracle $B$ satisfying $\G^B(n)$. By Proposition 3,  $\G^B(n)$ implies $\H^B(n)$ but also $\G^B(f)$ and $\H^B(f)$ for any time-constructible time bound $f$.
For any computable time bound $f(n)\in\ord(2^n)$, a computable oracle $B$ with $\DT^B(f)\not=\NT^B(f)$ can be obtained by a straightforward adaptation of the standard construction of $B$ satisfying $\P^B\not=\NP^B$ due to \cite{BGS}.

The results of this paper should have demonstrated the usefulness of the notions introduced in Definitions 1-5.
It turned out that hierarchy property and gap property are related not only to each other but also to the central problem of separation of determinism from nondeterminism. This stresses the hardness of $\H(f)$ and $\G(f)$.
As a particular point, we still remark that the existence of $f$-complete sets has been used only in a rather special way in this paper. It would be interesting to look for further substantial applications of this concept.

\section*{Acknowledgement}
I am grateful to readers and referees of earlier versions of this paper whose comments and hints contributed to improvements of the presentation.

\section*{Appendix. Details of the relativizations}

This appendix is devoted to the relativized versions of the main results of the paper. They will precisely be formulated in order to confirm Proposition 3. As far as the proofs are only straightforward translations of those of the unrelativized versions, they will merely be sketched. Sometimes, however, one has to be more careful, since, e.g., unrelativized reductions have to be applied to (members of) relativized complexity classes. We shall emphasize these details.

Throughout this section, let $A$ be a fixed language taken as oracle set.

\begin{lemma}[Lemma 2 relativized]
$\H^A(n)$ implies $\DT^A(n)\not=\NT^A(n)$.
\end{lemma}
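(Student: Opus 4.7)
The plan is to mirror the proof of Lemma 2, replacing the $\NT(n)$-complete language $V$ by an oracle-aware analogue $V^A$ that folds $A$ into the acceptance condition. Specifically, I would set
\bea
V^A & = & \{\,\langle w,\code(\frak{M})^t\rangle:\; w\in\{0,1\}^*,\; |w|\leq t\in\Nplus, \mbox{ $\frak{M}$ is a 2-tape oracle NTM} \\
 & & \hspace*{2.9cm} \mbox{that accepts $w$ in $\leq t$ steps when equipped with oracle $A$}\,\},
\eea
and then verify in order that (i) $V^A\in\NT^A(n)$, via a universal oracle NTM which reads $\code(\frak{M})$, determines $t$ from the padding, and nondeterministically simulates $\frak{M}$ step by step while forwarding each of its queries to its own oracle tape; and (ii) $V^A$ is $\redlin$-hard for $\NT^A(n)$, via the same reduction $\varrho(w)=\langle w,\code(\frak{M}_L)^{c_L\cdot|w|}\rangle$ as in the unrelativized proof.

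For the hardness step I would first invoke the relativized form of fact $(1)$ to obtain, for a given $L\in\NT^A(n)$, a 2-tape oracle NTM $\frak{M}_L$ that accepts $L$ in time $c_L\cdot n$ relative to $A$. The Book-Greibach-Wegbreit tape reduction relativizes cleanly because it only reorganizes work tapes and passes oracle queries through unchanged; I would include this observation as a short remark. The reduction $\varrho$ itself does not consult any oracle and is computable by a 1-tape (ordinary) DTM $\frak{M}_\varrho$ in linear time.

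Finally, I would assume $\DT^A(n)=\NT^A(n)$ and conclude $V^A\in\DT^A(n)$, hence the existence of a $b_0$-tape oracle DTM $\frak{M}_{V^A}$ deciding $V^A$ in linear time relative to $A$. Composing $\frak{M}_\varrho$ in front of $\frak{M}_{V^A}$ yields a $(b_0+1)$-tape oracle DTM that decides any $L\in\DT^A(n)$ in linear time relative to $A$, so $\DT^A(n)=\DT_{b_0+1}^A(n)$, contradicting $\H^A(n)$.

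The main obstacle is not technical but one of careful bookkeeping: the reduction $\varrho$ must remain oracle-free, since otherwise composing it with $\frak{M}_{V^A}$ would neither preserve the $A$-relative linear time bound nor keep the tape count under control. The only other point worth making explicit is the relativized form of $(1)$, which is used both to restrict the definition of $V^A$ to 2-tape oracle NTMs and to produce $\frak{M}_L$ from an arbitrary oracle NTM for $L$.
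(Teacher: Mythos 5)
Your proposal is correct and follows essentially the same route as the paper's own proof: the same oracle-aware language $V^A$, the same oracle-free reduction $\varrho$, the same appeal to the relativized tape-number reduction $(1)$, and the same composition argument yielding $\DT^A(n)=\DT_{b_0+1}^A(n)$. The two points you single out as needing care --- that $\varrho$ stays unrelativized and that fact $(1)$ relativizes --- are exactly the points the paper also emphasizes.
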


\begin{proof}
We employ the relativized version of $V$ which was already used in \cite{He}:
\bea
V^A & = & \{ \langle w, \code(\frak{M})^t\rangle: \;
   w\in\{0,1\}^*, |w|\leq t\in\Nplus \mbox{ and $\frak{M}$ is a 2-tape oracle NTM} \\
   & & \hspace*{3.0cm}\mbox{such that $\frak{M}^A$ accepts $w$ in $\leq t$ steps}\}.
\eea
Herein let $\code(\frak{M})$ be a standard encoding of the oracle NTM $\frak{M}$ as a word over the binary alphabet $\{0,1\}$. It does not depend on $A$. $\frak{M}^A$ denotes the machine $\frak{M}$ working with oracle $A$.
Again, we restrict the complexity classes to languages  $L\subseteq \{0,1\}^*$.
Moreover, it has to be noticed that fact (1) relativizes. It can easily be proved that
$V^A$ is $\NT^A(n)$-complete with respect to linear-time reductions, see \cite{He}.

We proceed with showing that $\DT^A(n)=\NT^A(n)$ yields $\neg\,\H^A(n)$. If $\DT^A(n)=\NT^A(n)$, then $V^A\in\DT^A(n)$.
There would be a $b_0$-tape oracle DTM $\frak{M}_V$ such that $\frak{M}_V^A$ decides $V^A$ in linear time.
For any language $L\in\DT(n)^A\cap \{0,1\}^*$, an (unrelativized) linear-time reduction of $L$ to $V^A$ is given by the assignment
\[ w \; \stackrel{\varrho}{\longmapsto} \; \langle w , \code(\frak{M}_L)^{c_L\cdot|w|}\rangle,
\]
where $\frak{M}_L$ is a $2$-tape oracle NTM such that $\frak{M}_L^A$ accepts $L$ in linear time.
The word function  $\varrho$ defined in this way can be computed in linear time by a $1$-tape DTM $\frak{M}_{\varrho}$.
The composition of $\frak{M}_{\varrho}$ with $\frak{M}_V$ yields a $(b_0+1)$-tape oracle DTM deciding $L$ in linear time with respect to the oracle $A$.
Hence $\DT^A(n)=\DT_{b_0+1}^A(n)$, i.e., $\neg\,\H^A(n)$.
\qed
\end{proof}

\begin{lemma}[Lemma 3 relativized]
If $\,\DT^A(f)=\DT^A_{b_0}(f)$ and the time bound $g$ is time-constructible by a $b_g$-tape DTM, then $\DT^A(f\circ g)=\DT^A_{\max(b_0,b_g)+1}(f\circ g)$.
\end{lemma}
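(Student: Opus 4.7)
The plan is to follow the proof of Lemma 3 verbatim, substituting ``oracle DTM with oracle $A$'' for ``DTM'' throughout. The standard padding argument translates to oracle machines without modification, because padding an input string does not interact with oracle queries. I would take any $L\in\DT^A(f\circ g)$, accepted by an oracle DTM $\frak{M}$ in time $\Ord((f\circ g)(n))$ with oracle $A$, form the padded language at the smaller time bound $f$, invoke the hypothesis $\DT^A(f)=\DT^A_{b_0}(f)$, and then unpad to obtain an acceptor for $L$ of the claimed tape number.

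Concretely, set
\[ L' \;=\; \{\, w\cdot\$^{\,g(|w|)-|w|}\,:\, \frak{M}^A \text{ accepts } w\,\}. \]
First I would show $L'\in\DT^A(f)$: on an input $z$ of length $n$, use the $b_g$-tape DTM constructing $g$ to compute $1^{g(|w|)}$ and thereby verify that $z$ has the form $w\cdot\$^{g(|w|)-|w|}$ in time $\Ord(g(|w|))=\Ord(n)$, then simulate $\frak{M}^A$ on $w$ in time $\Ord(f(g(|w|)))=\Ord(f(n))$, forwarding each query of $\frak{M}$ to $A$ via the shared oracle tape. By the hypothesis, there is a $b_0$-tape oracle DTM $\frak{M}'$ accepting $L'$ with oracle $A$ in time $\Ord(f(n))$. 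Then I would construct an oracle DTM $\frak{M}''$ with $\max(b_0,b_g)+1$ work tapes (and one oracle tape) accepting $L$ with oracle $A$ as follows: on input $w$, it first pads $w$ to $w'=w\cdot\$^{g(|w|)-|w|}$ using $b_g+1$ work tapes, one of which carries $w'$ at the end, in time $\Ord(g(|w|))$; it then lets this tape play the role of $\frak{M}'$'s input tape and simulates $\frak{M}'$ using $b_0$ work tapes (the $b_g$ tapes already present, augmented by $b_0-b_g$ additional ones when $b_g<b_0$) together with the common oracle tape. The total running time is $\Ord(g(|w|))+\Ord(f(g(|w|)))=\Ord((f\circ g)(n))$, giving $L\in\DT^A_{\max(b_0,b_g)+1}(f\circ g)$.

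I do not anticipate any substantive obstacle. The only point worth verifying is that the single oracle tape of $\frak{M}''$ is adequate for both phases of the construction: since the constructibility machine for $g$ is an ordinary (non-oracle) DTM, it issues no queries during the padding phase, so the oracle tape is blank when the simulation of $\frak{M}'$ begins and can be used freely to relay $\frak{M}'$'s queries to $A$. Everything else is bookkeeping identical to the unrelativized case.
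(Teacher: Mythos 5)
Your proposal is correct and follows exactly the route the paper intends: the paper's own proof of this lemma simply states that the standard padding argument of Lemma 3 relativizes, combining the $b_g$-tape (non-oracle) DTM constructing $g$ with the $b_0$-tape oracle DTM, and leaves the details to the reader. Your write-up supplies precisely those details, including the one point genuinely worth checking, namely that the padding phase issues no oracle queries so the single oracle tape is free for the simulation of $\frak{M}'$.
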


\begin{proof}
It is folklore that the technique of standard padding relativizes. Indeed, the proof of Lemma 3 can straightforwardly be relativized, where a $b_g$-tape DTM computing $g$ in time $\Ord(g)$ has to be combined with a suitable
$b_0$-tape oracle DTA.
The details are left to the reader.
\qed
\end{proof}

The standard diagonalization proving fact (4) relativizes, too. Hence our proof of Lemma 4 can immediately be translated into its relativized version, and we have
\begin{lemma}[Lemma 4 relativized]
$\G^A(f)$ implies $\H^A(f)$ for all time bounds $f$.
\end{lemma}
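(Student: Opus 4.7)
The plan is to transcribe the proof of the unrelativized Lemma 4 into the oracle setting. Its entire force comes from fact $(4)$, so first I would note that fact $(4)$ relativizes: namely,
\[
\DT^A_b(f_1) \;\subset\; \DT^A_{b+1}(f_2)
\]
whenever $f_1 \in \ord(f_2)$ and $f_2$ is time-constructible by a $(b+1)$-tape DTM. The standard tape-sensitive diagonalization of \cite{Re} universally simulates every $b$-tape DTM by a $(b+1)$-tape DTM, clocked by $f_2$, and flips one output bit. Attaching an oracle tape to every machine in sight and forwarding all queries to the same $A$ leaves the construction intact; the $f_2$-clock is still driven by an unrelativized time-constructor, so time-constructibility need not be relativized.

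Next I would invoke $\G^A(f)$ to obtain a time-constructible bound $f_2$ with $f \in \ord(f_2)$ and $\DT^A(f) = \DT^A(f_2)$. Since time-constructibility is an unrelativized notion in this paper, $f_2$ is constructible by some fixed-tape DTM and, by padding with dummy work tapes, by every sufficiently large $(b+1)$-tape DTM. The relativized fact $(4)$ then yields
\[
\DT^A_b(f) \;\subseteq\; \DT^A_b(f_2) \;\subset\; \DT^A_{b+1}(f_2) \;\subseteq\; \DT^A(f_2) \;=\; \DT^A(f)
\]
for almost all $b \in \N$. Because $\DT^A_b(f) \subseteq \DT^A_{b+1}(f)$ holds in general, the strictness propagates downward to every $b$: if $\DT^A_{b_0}(f) = \DT^A(f)$ for some small $b_0$, the inclusion chain would force equality at all larger indices, contradicting what has just been shown. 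This yields $\H^A(f)$.

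The only point requiring care is the relativization of fact $(4)$, and this is routine rather than an obstacle: the oracle appears symmetrically in both the universal simulator and the simulated machines, and the $f_2$-clock is oracle-free. Everything else is a direct rereading of the unrelativized argument, so I anticipate no genuine mathematical difficulty.
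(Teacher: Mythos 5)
Your proposal is correct and takes essentially the same route as the paper, whose entire proof is the remark that the diagonalization behind fact $(4)$ relativizes (for exactly the reason you give: the oracle enters the universal simulator and the simulated machines symmetrically, while the $f_2$-clock is oracle-free) so that the unrelativized argument carries over verbatim. One small notational slip: in your displayed chain the strict inclusion is placed between $\DT^A_b(f_2)$ and $\DT^A_{b+1}(f_2)$, which fact $(4)$ does not deliver (it would need $f_2\in\ord(f_2)$, and that strictness is precisely the open tape-number problem); the strictness should be between $\DT^A_b(f)$ and $\DT^A_{b+1}(f_2)$, using $f\in\ord(f_2)$, after which the rest of your argument, including the downward propagation to all $b$, is fine.
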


Since Lemma 5 follows by standard padding, we immediately have
\begin{lemma}[Lemma 5 relativized]
If $\G^A(f)$ and the time bound $g$ is time-constructible, then it holds $\G^A(f\circ g)$.
\end{lemma}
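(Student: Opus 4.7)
The plan is to mimic the unrelativized proof of Lemma 5 step by step, verifying that each ingredient is oblivious to the oracle. Starting from $\G^A(f)$, I would fix a time-constructible bound $f_2$ such that $f\in\ord(f_2)$ and $\DT^A(f)=\DT^A(f_2)$, and I would propose $f_2\circ g$ as the witness for $\G^A(f\circ g)$.

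Three items then need to be verified. First, $f\circ g\in\ord(f_2\circ g)$, which follows immediately from $f\in\ord(f_2)$ together with the fact that any time bound, and in particular $g$, is non-decreasing and unbounded. Second, $f_2\circ g$ is time-constructible: I would cascade the DTM constructing $g$ with the DTM constructing $f_2$, using $1^{g(n)}$ as the ``simulated input length'' of the latter. Neither sub-machine needs to consult the oracle, so this construction carries over verbatim to the relativized setting. Third, $\DT^A(f\circ g)=\DT^A(f_2\circ g)$: this is relativized standard padding in the style of \cite{RF}. Given $L\in\DT^A(f_2\circ g)$ accepted by an oracle DTM $\frak{M}^A$ in time $\Ord(f_2\circ g(n))$, I would set $L'=\{w\cdot\$^{g(|w|)-|w|}:w\in L\}$; time-constructibility of $g$ lets an oracle DTM recognize $L'$ in time $\Ord(f_2(n))$ by running $\frak{M}^A$ on the unpadded prefix, whence $L'\in\DT^A(f_2)=\DT^A(f)$. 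A symmetric second padding step converts an $\Ord(f)$ oracle acceptor of $L'$ back into an $\Ord(f\circ g)$ oracle acceptor of $L$. The reverse inclusion $\DT^A(f\circ g)\subseteq\DT^A(f_2\circ g)$ is immediate from the first item.

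The only subtlety is to confirm that padding commutes with relativization: all queries issued by the simulated oracle machine must be forwarded to $A$ unaltered, which is automatic since padding and depadding never inspect or rewrite queries. Thus there is no genuine obstacle beyond routine bookkeeping, and the argument yields $\G^A(f\circ g)$.
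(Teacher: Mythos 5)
Your proposal is correct and follows essentially the same route as the paper, which disposes of the relativized Lemma~5 by observing that the padding argument of \cite{RF} underlying the unrelativized version carries over verbatim to oracle machines. You merely spell out the details (choice of witness $f_2\circ g$, its time-constructibility, and the two padding directions) that the paper leaves implicit, and your observation that the padding/depadding steps never touch the oracle is exactly the point the paper relies on.
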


To show the relativized version of Proposition 1, it is enough to remark that fact (5), the nondeterministic hierarchy theorem, relativizes. This indeed holds, see the proof in \cite{Za}. So it straightforwardly follows
\begin{proposition}[Proposition 1 relativized]
If $\G^A(f_1)$ for a time bound $f_1$ satisfying $(+)$, then $\DT^A(f_2)\not=\NT^A(f_2)$ for any time-constructible bound $f_2$ with $f_1\in\ord(f_2)$ and $\DT^A(f_1)=\DT^A(f_2)$.
\end{proposition}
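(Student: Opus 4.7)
The plan is to replay the proof of Proposition~1 in the oracle setting, the only nontrivial input being a relativized form of fact~(5). Concretely, I would begin by assuming $\G^A(f_1)$ with a time-constructible witness $f_2$ satisfying $f_1\in\ord(f_2)$ and $\DT^A(f_1)=\DT^A(f_2)$, and then argue by contradiction, supposing $\DT^A(f_2)=\NT^A(f_2)$.

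The one step requiring some thought is the relativized nondeterministic hierarchy, namely $\NT^A(f_1)\subset\NT^A(f_2)$ under the hypotheses $f_1(n+1)\in\ord(f_2(n))$ and $f_2$ time-constructible. Z\'ak's proof in \cite{Za} proceeds by diagonalization against a universal nondeterministic simulator that loses only a constant factor in time; because oracle queries cost one step in both simulator and simulated machine and the clocking is done on a separate tape, no part of the argument inspects the oracle, so the entire construction transcribes verbatim to the relativized setting when all participating machines share the oracle $A$. To feed the hypothesis $f_1(n+1)\in\ord(f_2(n))$ into this relativized hierarchy, I would use condition~$(+)$: from $f_1(n+1)\le c\cdot f_1(n)$ for almost all $n$ and $f_1(n)\in\ord(f_2(n))$, one immediately gets $f_1(n+1)\in\ord(f_2(n))$.

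Combining these ingredients, the contradiction chain
\[
\DT^A(f_1)\subseteq\NT^A(f_1)\subset\NT^A(f_2)=\DT^A(f_2)=\DT^A(f_1)
\]
refutes the assumption $\DT^A(f_2)=\NT^A(f_2)$ and yields the claimed separation. The main (and essentially only) obstacle is confirming that Z\'ak's hierarchy theorem relativizes; everything else, from the extraction of $f_1(n+1)\in\ord(f_2(n))$ via $(+)$ to closing the chain of inclusions, is purely formal and oracle-insensitive, so no new subtlety appears beyond what is already present in the unrelativized argument.
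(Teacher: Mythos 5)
Your proposal is correct and follows essentially the same route as the paper: the paper likewise reduces everything to the observation that fact (5), \v{Z}\`{a}k's nondeterministic hierarchy theorem, relativizes (citing the proof in \cite{Za}), and then replays the contradiction chain of Proposition 1 verbatim with oracle $A$ attached. Your additional remarks on why the diagonalization is oracle-insensitive, and on extracting $f_1(n+1)\in\ord(f_2(n))$ from $(+)$, match the paper's (terser) treatment.
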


Now we are going to relativize a combination of Lemmas 6 and 7 which is just needed to prove the relativized version of Theorem 1. First the notion of $f$-completeness is modified accordingly.
A language $B$ is said to be {\it $(f,A,1)$-complete}\/ if $\,B\in\NT^A(f)$ and $L\redlino B$ for all $L\in\NT^A(f)$. Here $\redlino$ means m-reducibility
via a function computable by a $1$-tape DTM in linear time.

\begin{lemma}[from Lemmas 6, 7 relativized]
Let $B$ be an $(f,A,1)$-complete language for a subhomogeneous time bound $f$.
If $B\in\DT^A_{b_0}(f)$ for some $b_0\in \Nplus$,
then $\DT^A(f)=\NT^A(f)=\DT^A_{b_0+1}(f)$.
\end{lemma}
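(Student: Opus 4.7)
The plan is to adapt the composition argument of Lemmas 6 and 7 to the oracle setting in a fairly direct way. Since the inclusions $\DT^A_{b_0+1}(f)\subseteq\DT^A(f)\subseteq\NT^A(f)$ are trivial, it suffices to show $\NT^A(f)\subseteq\DT^A_{b_0+1}(f)$. I would pick an arbitrary $L\in\NT^A(f)$ and invoke $(f,A,1)$-completeness of $B$ to obtain a function $\varrho$ satisfying $w\in L\Leftrightarrow\varrho(w)\in B$ which is computable in linear time by some $1$-tape (ordinary, non-oracle) DTM $\frak{M}_\varrho$; in particular $|\varrho(w)|\leq c\cdot|w|$ for a constant $c$. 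By hypothesis there is also a $b_0$-tape oracle DTM $\frak{M}_B$ such that $\frak{M}_B^A$ decides $B$ in time $\Ord(f)$.

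Next I construct a $(b_0+1)$-tape oracle DTM $\frak{M}$ that decides $L$ relative to $A$: on input $w$, $\frak{M}$ first runs $\frak{M}_\varrho$ using one designated work tape and writes $\varrho(w)$ on it; in a second phase, $\frak{M}$ simulates $\frak{M}_B^A$ by treating this designated tape as the read-only input tape of $\frak{M}_B$, using the remaining $b_0$ work tapes to carry the work tapes of $\frak{M}_B$, and using its own oracle tape for the queries to $A$. The work-tape count is exactly $1+b_0$, which is precisely what the ``$1$'' in $(f,A,1)$-completeness buys us.

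For the running time, phase one costs $\Ord(|w|)$ and phase two costs $\Ord(f(|\varrho(w)|))$. Subhomogeneity of $f$ now yields $f(|\varrho(w)|)\leq f(c\cdot|w|)\leq\overline{c}\cdot f(|w|)$ for some constant $\overline{c}$, so the total is $\Ord(f(|w|))$. Hence $L\in\DT^A_{b_0+1}(f)$, which completes the argument.

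The only points to be careful about, and the closest thing to an obstacle, are bookkeeping ones: one must observe that the reducer $\varrho$ from the definition of $(f,A,1)$-completeness is oracle-free (that is why $\redlino$ is used rather than an oracle reducibility), so that in phase one no oracle tape is needed and phase two can take over the single shared oracle tape unambiguously; and one must keep track that a $1$-tape reducer plus a $b_0$-tape decider combine into exactly $b_0+1$ work tapes rather than more. Everything else is a routine translation of the unrelativized proofs.
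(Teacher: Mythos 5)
Your proposal is correct and is essentially the paper's own argument: the paper proves this lemma by translating the composition argument of Lemmas 6 and 7 word-for-word into the oracle setting, which is exactly the construction you carry out (oracle-free $1$-tape reducer feeding a $b_0$-tape oracle decider, with subhomogeneity absorbing the length blow-up $|\varrho(w)|\leq c\cdot|w|$ into $\Ord(f(|w|))$). Your closing remarks on the reducer being oracle-free and on the $1+b_0$ tape count are precisely the bookkeeping points the paper's appendix flags as the only places requiring care.
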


\begin{proof} The nontrivial part  of the proof of Lemma 6 can word-by-word be translated into its relativized version.
\qed
\end{proof}

\begin{proposition}[Proposition 2 relativized]
If the time bound $f$ is superhomogeneous and time-constructible, then there is an $(f,A,1)$-complete language.
\end{proposition}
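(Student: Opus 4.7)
The plan is to mimic the construction of $V_f$ from the unrelativized Proposition 2 almost verbatim, modifying only the definition so that the embedded machines are oracle NTMs. Concretely, define
\[ V_f^A \;=\; \{\, w_{\frak{M}}^a \,:\; w\in\{0,1\}^*,\; \frak{M} \text{ is a 2-tape oracle NTM},\; a\in\N,\; |w_{\frak{M}}^a| = c'\cdot|w|\]
for some $c'\in\Nplus$ with $|\code(\frak{M})|\cdot f(|w|) \leq f(c'\cdot|w|)$, and such that $\frak{M}^A$ accepts $w$ in at most $f(|w|)$ steps. The encoding $\code(\frak{M})$ of an oracle NTM does not depend on $A$, so the format-check part of the string $w_{\frak{M}}^a$ is oracle-free, exactly as in the unrelativized version.

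First I would verify $V_f^A\in\NT^A(f)$. On input $w'$, the machine deterministically checks the shape of $w'$, extracts $w$, $\frak{M}$, $a$, and computes the witness $c'$; by time-constructibility of $f$ this is possible in $\Ord(f(|w'|))$ steps and uses no oracle queries. The condition $|\code(\frak{M})|\cdot f(|w|)\leq f(c'\cdot|w|)$ is then checked in another $\Ord(f(|w'|))$ steps. Finally, the nondeterministic simulation of $\frak{M}^A$ on $w$ for $f(|w|)$ steps, with each oracle query of $\frak{M}$ routed to $A$, takes at most $|\code(\frak{M})|\cdot f(|w|)\leq f(c'\cdot|w|)=f(|w'|)$ steps; this is the only place where the oracle is consulted.

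For hardness, let $L\in\NT^A(f)$. Here I would invoke the relativized analogues of the linear speedup theorem and of fact $(1)$, namely $\NT^A(f)=\NT^A_2(f)$. Both relativize because the underlying constructions are oblivious tape-by-tape simulations that preserve the oracle interface. Hence there is a 2-tape oracle NTM $\frak{M}_L$ such that $\frak{M}_L^A$ accepts $L$ within $f(n)$ steps. By superhomogeneity of $f$, choose $c'_L\in\Nplus$ with $|\code(\frak{M}_L)|\cdot f(n)\leq f(c'_L\cdot n)$ and $c'_L\geq 2|\code(\frak{M}_L)|+3$. Define
\[ \varrho_L(w) \;=\; w_{\frak{M}_L}^a\,, \qquad a = c'_L\cdot|w| - 2|\code(\frak{M}_L)| - |w| - 2\,. \]
Exactly as in the proof of Proposition 2, $\varrho_L$ is computable by a 1-tape DTM in linear time, and $w\in L \iff \varrho_L(w)\in V_f^A$, so $L\redlino V_f^A$.

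The only point needing real care is the reduction itself: to match the definition of $(f,A,1)$-completeness, $\varrho_L$ must be \emph{unrelativized} (plain 1-tape DTM, linear time). This is automatic in our construction because $\varrho_L$ depends on $A$ only through the constants $\code(\frak{M}_L)$ and $c'_L$, which are hard-wired into the reduction; the computation $\varrho_L$ itself performs is pure padding and concatenation. Thus the main obstacle reduces to the bookkeeping observation that the background facts used in the unrelativized proof (time-constructibility usage, linear speedup, and the nondeterministic tape-number reduction) all carry over unchanged to the oracle setting.
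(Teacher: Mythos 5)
Your proposal is correct and follows essentially the same route as the paper: it defines the same language $V^A_f$ (with $2$-tape oracle NTMs in place of plain NTMs) and carries over the membership and hardness arguments from the unrelativized Proposition 2, observing that the reduction $\varrho_L$ itself remains an unrelativized linear-time $1$-tape computation and that fact $(1)$ relativizes. The paper states these adaptations only in outline, so your write-up is simply a more explicit version of the same proof.
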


\begin{proof}
Applying the notations from the proof of Proposition 2, we put
 \bea
 V^A_f & = & \{ w_{\frak{M}}^a\,:\; w\in\{0,1\}^*,\en \frak{M} \mbox{ is a $2$-tape oracle NTM,} \en a\in\N, \\
  & & \hspace*{1.3cm} |w_{\frak{M}}^a|= c'\cdot |w| \mbox{ for some $c'\in\Nplus$ such that }
   |\code(\frak{M})|\cdot f(|w|) \leq f(c'\cdot |w|), \\
  & & \hspace*{1.3cm} \mbox{and $\,\frak{M}^A\,$ accepts $w$ in $\,\leq f(|w|)$ steps}\, \}.
  \eea
 Both that $V^A_f\in\NT^A(f)$ as well as the hardness of $V^A_f$ for the class $\NT^A(f)$ with respect to $\redlino$ can be shown by straightforward modifications of the related parts of the proof of Proposition 2.
\qed
\end{proof}

Finally, by adapting the short proof of Theorem 1, from Proposition 5 and Lemma 12 it follows
\begin{theorem}[Theorem 1 relativized]
If the time bound $f$ is time-constructible and quasihomogeneous, then $\H^A(f)$ implies the separation
$\DT^A(f)\not= \NT^A(f)$.
\end{theorem}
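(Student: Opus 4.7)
The plan is to mirror the short argument used for the unrelativized Theorem 1, but now invoking the relativized building blocks assembled in the appendix (Proposition 5 and Lemma 12). I would argue indirectly: suppose $\DT^A(f) = \NT^A(f)$; the goal is to derive $\DT^A(f) = \DT^A_{b_0+1}(f)$ for some $b_0\in\N$, which contradicts $\H^A(f)$.

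First, I would appeal to Proposition 5 (the relativized Proposition 2) to obtain an $(f,A,1)$-complete language $V^A_f\in \NT^A(f)$. Since $f$ is quasihomogeneous, it is in particular subhomogeneous, so the hypothesis of Lemma 12 on $f$ is available. Under the assumption $\DT^A(f) = \NT^A(f)$, we have $V^A_f\in\DT^A(f)$, so $V^A_f\in\DT^A_{b_0}(f)$ for some fixed $b_0\in\N$.

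Next, I would invoke Lemma 12 (the relativized combination of Lemmas 6 and 7) with this language $V^A_f$ and this $b_0$, which yields $\DT^A(f) = \NT^A(f) = \DT^A_{b_0+1}(f)$. In particular $\DT^A(f) = \DT^A_{b_0+1}(f)$, which directly contradicts the assumption $\H^A(f)$ (since $\H^A(f)$ asserts $\DT^A_b(f)\neq \DT^A(f)$ for every $b\in\N$). This closes the indirect argument.

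There is no real obstacle here beyond checking that the ingredients have been set up correctly in the appendix; the only subtle point worth stressing is that the reduction witnessing $(f,A,1)$-completeness of $V^A_f$ must be unrelativized and computable by a $1$-tape DTM in linear time, so that, when composed with a $b_0$-tape oracle DTM deciding $V^A_f$, it yields a $(b_0+1)$-tape oracle DTM (and not one with an extra tape per oracle call). This is precisely what the ``$1$'' in $(f,A,1)$-completeness guarantees, and it is the reason the appendix sharpened the completeness notion before proving Proposition 5 and Lemma 12. Once this is observed, the proof reduces to the one-line conclusion ``Proposition 5 $+$ Lemma 12 $\Rightarrow$ $\neg\,\H^A(f)$,'' exactly parallel to the original proof of Theorem 1.
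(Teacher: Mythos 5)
Your proposal is correct and follows exactly the paper's route: an indirect argument combining Proposition 5 (existence of an $(f,A,1)$-complete language) with Lemma 12 to conclude $\DT^A(f)=\DT^A_{b_0+1}(f)$, contradicting $\H^A(f)$. Your added remark about why the reduction must be computable by a $1$-tape DTM (so the composed oracle machine uses only $b_0+1$ work tapes) correctly identifies the point the appendix's sharpened completeness notion is designed to handle.
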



\begin{thebibliography}{999}
\bibitem{Aa} S.O.~Aanderaa, On $k$-tape versus $(k-1)$-tape real time computation, in: R.M.~Karp (Ed.),
Complexity of Computation, SIAM-AMS Proceedings 7, Providence, Rhode Island, 1974, pp.~75-96.
\bibitem{BGS} T.~Baker, J.~Gill, and R.~Solovay, Relativizations of the $\P=?\,\NP$ question,
SIAM Journal on Computing 4 (1975) 431-442.
\bibitem{BDG} J.L.~Balc\'{a}zar, J.~D\'{\i}az, and J.~Gabarr\'{o},
Structural Complexity II, Springer, Berlin, 1990.
\bibitem{BGW} R.V.~Book, S.A.~Greibach, and B.~Wegbreit, Time and tape bounded Turing acceptors and AFL's,
Journal of Computer and System Sciences 4 (1970) 606-621.
\bibitem{Bo} A.~Borodin, Computational complexity and the existence of complexity gaps, Journal of the ACM 19 (1972) 158-174.
\bibitem{CN} S.~Cook and P.~Nguyen, Logical Foundations of Proof Complexity,
Cambridge University Press, New York, 2010.
\bibitem{DK} D.-Z.~Du and K.-I Ko, Theory of Computational Complexity, Wiley-Interscience, New York, 2000.
\bibitem{Fo} L.~Fortnow, The role of relativization in complexity theory,
Bulletin of the EATCS 52 (1994) 229-244.
\bibitem{Gu} S.~Gupta, Alternating time versus deterministic time: a separation,
Mathematical Systems Theory 29 (1996) 661-672.
\bibitem{He} A.~Hemmerling, Observations on complete sets between linear time and polynomial time,
Information and Computation 209 (2011) 173-182.
\bibitem{Hen} F.C.~Hennie, One-tape, off-line Turing machine computations, Information and Control 8 (1965) 553-578.
\bibitem{HS} F.C.~Hennie and R.E.~Stearns, Two-tape simulation of multitape Turing machines,
Journal of the ACM 13 (1966) 533-546.
\bibitem{Ka} R.~Kannan, Towards separating nondeterminism from determinism,
Mathematical Systems Theory 17 (1984) 29-45.
\bibitem{MSST} W.~Maass, G.~Schnitger, E.~Szemer\'{e}di, and G.~Tur\'{a}n, Two tapes versus one for off-line Turing machines,  Computational Complexity 3 (1993) 392-401.
\bibitem{PSSN} R.~Paturi, J.I.~Seiferas, J.~Simon, and R.E.~Newman-Wolfe, Milking the Aanderaa argument, Information and Computation 88 (1990) 88-104.
\bibitem{Pa} W.~Paul, On-line simulation of $k+1$ tapes by $k$ tapes requires nonlinear time, Information and Control 53 (1982) 1-8.
\bibitem{PPR} W.J.~Paul, E.J.~Prauss, and R.~Reischuk, On alternation,
Acta Informatica 14 (1980) 243-255.
\bibitem{PPST} W.~Paul, N.~Pippenger, E.~Szemer\`{e}di, and W.T.~Trotter,
On determinism versus nondeterminism and related problems,
Proceedings of the 24th IEEE Symposium on Foundations of Computer Science (1983) 429-438.
\bibitem{Re} K.~R.~Reischuk, Einf\"{u}hrung in die Komplexit\"{a}tstheorie, B.~G.~Teubner, Stuttgart, 1990.
\bibitem{RF} S.~Ruby and P.C.~Fischer, Translational methods and computational complexity,
Proceedings of the 6th Annual IEEE Symposium on Switching Circuit Theory and Logical Design (1965) 173-178.
\bibitem{SFM} J.I.~Seiferas, M.J.~Fischer, and A.R.~Meyer, Separating nondeterministic time complexity classes,
Journal of the ACM 25 (1978) 146-167.
\bibitem{TYL} K.~Tadaki, T.~Yamakami, and J.C.H.~Lin, Theory of one-tape linear-time Turing machines,
Theoretical Computer Science 411 (2010) 22-43.
\bibitem{Za} S.~\v{Z}\`{a}k, A Turing machine time hierarchy, Theoretical Computer Science 26 (1983) 327-333.

\end{thebibliography}
\end{document}